\newtheorem{lemma}{Lemma}[section]
\newtheorem{corollary}{Corollary}[section]
\newtheorem{proposition}{Proposition}[section]
\newcolumntype{R}[1]{>{\raggedleft\let\newline\\\arraybackslash\hspace{0pt}}m{#1}}
\newcolumntype{L}[1]{>{\raggedright\let\newline\\\arraybackslash\hspace{0pt}}m{#1}}
\newenvironment{proof}[1][Proof]{\begin{trivlist}
\item[\hskip \labelsep {\bfseries #1}]}{\end{trivlist}}
\begin{document}
\title{Generalized multiple depot traveling salesmen problem - polyhedral study and exact algorithm}

\author[sk]{Kaarthik~Sundar\corref{cor1}}

\ead{kaarthiksundar@tamu.edu}

\author[sr]{Sivakumar~Rathinam}

\ead{srathinam@tamu.edu}

\cortext[cor1]{Corresponding author}

\address[sk]{Graduate Student, Dept. of Mechanical Engineering, Texas A$\&$M
University, College Station, TX 77843, USA}
\address[sr]{Assistant Professor, Dept. of Mechanical Engineering, Texas A$\&$M
University, College Station, TX 77843, USA}

\small

\begin{abstract}
The generalized multiple depot traveling salesmen problem (GMDTSP) is a variant of the multiple depot traveling salesmen problem (MDTSP), where each salesman starts at a distinct depot, the targets are partitioned into clusters and at least one target in each cluster is visited by some salesman. The GMDTSP is an NP-hard problem as it generalizes the MDTSP and has practical applications in design of ring networks, vehicle routing, flexible manufacturing scheduling and postal routing. We present an integer programming formulation for the GMDTSP and valid inequalities to strengthen the linear programming relaxation. Furthermore, we present a polyhedral analysis of the convex hull of feasible solutions to the GMDTSP and derive facet-defining inequalities that strengthen the linear programming relaxation of the GMDTSP. All these results are then used to develop a branch-and-cut algorithm to obtain optimal solutions to the problem. The performance of the algorithm is evaluated through extensive computational experiments on several benchmark instances.   
\end{abstract}
\begin{keyword}
\small
Generalized multiple depot traveling salesmen \sep Routing \sep Branch-and-cut \sep Polyhedral study
\end{keyword}

\maketitle
\section{Introduction\label{sec:Introduction}}
 The generalized multiple depot travelling salesmen problem (GMDTSP) is an important combinatorial optimization problem that has several practical applications including but not limited to maritime transportation, health-care logistics, survivable telecommunication network design (\cite{Bektas2011}), material flow system design, postbox collection (\cite{Laporte1996}), and routing unmanned vehicles (\cite{Manyam2014, Oberlin2010}). The GMDTSP is formally defined as follows: let $D:=\{d_1,\dots,d_k\}$ denote the set of depots and $T$, the set of targets. We are given a complete undirected graph $G=(V,E)$ with vertex set $V:=T\cup D$ and edge set $E:=\{(i,j):i\in V , j\in T\}$. In addition, a proper partition $C_1,\dots,C_m$ of $T$ is given; these partitions are called \emph{clusters}. For each edge $(i,j)=e \in E$, we associate a non-negative cost $c_e=c_{ij}$. The GMDTSP consists of determining a set of at most $k$ simple cycles such that each cycle starts an ends at a distinct depot, at least one target from each cluster is visited by some cycle and the total cost of the set of cycles is a minimum. The GMDTSP reduces to a multiple depot traveling salesmen problem (MDTSP - \cite{Benavent2013}) when every cluster is a singleton set. The GMDTSP involves two related decisions:
\begin{enumerate}
	\item choosing a subset of targets $S\subseteq T$, such that $|S\cap C_h|\geq 1$ for $h=1,\dots,m$;
	\item solving a MDTSP on the subgraph of $G$ induced by $S\cup D$.
\end{enumerate}

The GMDTSP can be considered either as a generalization of the MDTSP in \cite{Benavent2013} where the targets are partitioned into clusters and at least one target in each cluster has to be visited by some salesman or as a multiple salesmen variant of the symmetric generalized traveling salesman problem (GTSP) in \cite{Fischetti1995, Fischetti1997}. \cite{Benavent2013} and \cite{Fischetti1995} present a polyhedral study of the MDTSP and GTSP polytope respectively, and develop a branch-and-cut algorithm to compute optimal solutions for the respective problem.

This is the first work in the literature that analyzes the facial structure and derives additional valid and facet-defining inequalities for the convex hull of feasible solutions to the GMDTSP. This paper presents a mixed-integer linear programming formulation and develops a branch-and-cut algorithm to solve the problem to optimality. This work generalizes the results of the two aforementioned problems namely the MDTSP (\cite{Benavent2013}) and the GTSP (\cite{Fischetti1995}). 

\subsection{Related work:\label{subsec:litreview}} 

A special case of the GMDTSP with one salesman, the symmetric generalized traveling salesman problem (GTSP), was first introduced by \cite{Henry1969} and \cite{Srivastava1969} in relation to record balancing problems arising in computer design and to the routing of clients through agencies providing various services respectively. Since then, the GTSP has attracted considerable attention in the literature as several variants of the classical traveling salesman problem can be modeled as a GTSP (\cite{Laporte1996,Feillet2005, Oberlin2009, Manyam2014}). \cite{Noon1989} developed a procedure to transform a GTSP to an asymmetric traveling salesman problem and the \cite{Laporte1987} investigated the asymmetric counterpart of the GTSP. Despite most of the aforementioned applications of the GTSP (\cite{Laporte1996}) extending naturally to their multiple depot variant, there are no exact algorithms in the literature to address the GMDTSP.

A related generalization of the GMDTSP can be found in the vehicle routing problem (VRP) literature. VRPs are capacitated counterparts for the TSPs where the vehicles have a limited capacity and each target is associated with a demand that has to be met by the vehicle visiting that target. The multiple VRPs can be classified based on whether the vehicles start from a single depot or from multiple depots. The generalized multiple vehicle routing problem (GVRP) is a capacitated version of the GMDTSP with all the vehicles starting from a single depot. \cite{Bektas2011} present four formulations for the GVRP, compare the linear relaxation solutions for them, and develop a branch-and-cut to optimally solve the problem. In \cite{Laporte1987a}, \citeauthor{Laporte1987a} models the GVRP as a location-routing problem. On the contrary, \cite{Ghiani2000} develop an algorithm to transform the GVRP into a capacitated arc routing problem, which therefore enables one to utilize the available algorithms for the latter to solve the former. In a more recent paper, \cite{Bautista2008} study a special case of the GVRP derived from a waste collection application where each cluster contains at most two vertices. The authors describe a number of heuristic solution procedures, including two constructive heuristics, a local search method and an ant colony heuristic to solve several practical instances. To our knowledge, there are no algorithms in the literature to compute optimal solutions to the generalized multiple depot vehicle routing problem or the GMDTSP. 

The objective of this paper is to develop an integer programming formulation for the GMDTSP, study the facial structure of the GMDTSP polytope and develop a branch-and-cut algorithm to solve the problem to optimality. The rest of the paper is organized as follows: in Sec. \ref{sec:Formulation} we introduce notation and present the integer programming formulation. In Sec. \ref{sec:polyhedral}, the facial structure of the GMDTSP polytope is studied and its relation to the MDTSP polytope (\cite{Benavent2013}) is established. We also introduce a general theorem that allows one to lift any facet of the MDTSP polytope into a facet of the GMDTSP polytope. We further use this result to develop several classes of facet-defining inequalities for the GMDTSP. In the subsequent sections, the formulation is used to develop a branch-and-cut algorithm to obtain optimal solutions. The performance of the algorithm is evaluated through extensive computational experiments on 116 benchmark instances from the GTSP library (\cite{Gutin2010}). 

\section{Problem Formulation\label{sec:Formulation}}
We now present a mathematical formulation for the GMDTSP inspired by models in \cite{Benavent2013} and \cite{Fischetti1995}.
We propose a two-index formulation for the GMDTSP. We associate to each feasible solution $\mathcal{F}$, a vector $\mathbf{x}\in\mathbb{R}^{|E|}$ (a real vector indexed by the elements of $E$) such that the value of the component $x_{e}$ associated with edge $e$ is the number of times $e$ appears in the feasible solution $\mathcal{F}$. Note that for some edges $e\in E$, $x_{e}\in\{0,1,2\}$ \emph{i.e,} we allow the degenerate case where a cycle can only consist of a depot and a target. If $e$ connects two vertices $i$ and $j$, then $(i,j)$ and $e$ will be used interchangeably to denote the same edge. Similarly, associated to $\mathcal{F}$, is also a vector $\mathbf{y}\in\mathbb{R}^{|T|}$, \emph{i.e., } a real vector indexed by the elements of $T$. The value of the component $y_{i}$ associated with a target $i \in T$
is equal to one if the target $i$ is visited by a cycle and zero otherwise. 

For any $S\subset V$, we define $\gamma(S)=\{(i,j)\in E:i,j\in S\}$ and $\delta(S)=\{(i,j)\in E:i\in S,\, j\notin S\}$. If $S=\{i\},$ we simply write $\delta(i)$ instead of $\delta(\{i\})$. We also denote by $C_{h(v)}$ the cluster containing the target $v$ and define $W:=\{v\in T:|C_{h(v)}|=1\}$. Finally, for any $\hat{{E}}\subseteq E$, we define $x(\bar{E})=\sum_{(i,j)\in\bar{E}}x_{ij}$, and for any disjoint subsets $A,B\subseteq V$, $(A:B) = \{(i,j)\in E: i\in A, j\in B\}$ and $x(A:B)=\sum_{e\in(A:B)} x_{ij}$. Using the above notations, the GMDTSP is formulated as a mixed integer
linear program as follows:
\begin{flalign}
 & \text{Minimize}\quad\sum_{e\in E}c_{e}x_{e}\label{eq:obj} &\\
 & \text{subject to}\nonumber &\\
 & x(\delta(i))=2y_{i}\quad\forall i\in T,\label{eq:degree} \\
 & \sum_{i\in C_h} y_{i} \geq 1\quad\forall h\in \{1,\dots,m\},\label{eq:assignment} &\\
 & x(\delta(S))\geq 2y_{i}\quad\forall S\subseteq T,i\in S,\label{eq:sec} &\\
 & x(D':\{j\})+3x_{jk}+x(\{k\}:D\setminus D')\leq2(y_{j}+y_{k})\quad\forall j,k\in T;D'\subset D,\label{eq:4path} &\\
 & x(D':\{j\})+2x(\gamma(S\cup\{j,k\}))+x(\{k\}:D\setminus D')\leq\sum_{v\in S}2\, y_{v}+2(y_{j}+y_{k})-y_i\nonumber &\\
 & \qquad\qquad\qquad\forall i\in S;j,k\in T, j\neq k;S\subseteq T\setminus\{j,k\},S\neq\emptyset;D'\subset D,\label{eq:path} \\
 & x_{e}\in\{0,1\}\quad\forall e\in \gamma(T),\label{eq:xinteger1} &\\
 & x_{e}\in\{0,1,2\}\quad\forall e\in (D:T),\label{eq:xinteger2} &\\
 & y_{i}\in\{0,1\}\quad\forall i\in T.\label{eq:yinteger} 
\end{flalign}
In the above formulation, the constraints in \eqref{eq:degree} ensure the number edges incident on any vertex $i\in T$ is equal to $2$ if and only if target $i$ is visited by a cycle ($y_{i}=1$). The constraints in \eqref{eq:assignment} force at least one target in each cluster to be visited. The constraints in \eqref{eq:sec} are the connectivity or sub-tour elimination constraints. They ensure a feasible solution has no sub-tours of any subset of customers in $T$. The constraints in \eqref{eq:4path} and \eqref{eq:path} are the path elimination constraints. They do not allow for any cycle in a feasible solution to consist of more than one depot. The validity of these constraints is discussed in the subsection \ref{sub:Path-elimination}. Finally, the constraints \eqref{eq:xinteger1}-\eqref{eq:yinteger}
are the integrality restrictions on the $\mathbf{x}$ and $\mathbf{y}$ vectors.
\subsection{Path elimination constraints:\label{sub:Path-elimination} }
The first version of the path elimination constraints was developed in the context of location routing problems \cite{Laporte1986}. \citeauthor{Laporte1986} named these constraints as chain-barring constraints. Authors in \cite{Belenguer2011} and \cite{Benavent2013} use similar path elimination constraints for the location routing and the multiple depot traveling salesmen problems. The version of path elimination constraints used in this article is adapted from \cite{Sundar2014}. Any path that originates from a depot and visits exactly two customers before terminating at another depot is removed by the constraint in \eqref{eq:4path}. The validity of the constraint \eqref{eq:4path} can be easily verified as in \cite{Laporte1986, Sundar2014}. Any other path $d_{1},t_{1},\cdots,t_{p},d_{2}$, where $d_{1},d_{2}\in D$, $t_{1},\cdots,t_{p}\in T$ and $p\geq3$, violates inequality \eqref{eq:path} with $D'=\{d_{1}\},$ $S=\{t_{2},\cdots,t_{p-1}\}$, $j=t_{1}$, $k=t_{p}$ and $i=t_{r}$ where $2\leq r\leq p-1$. The proof of validity of the constraint in Eq. \eqref{eq:path} is discussed as a part of the polyhedral analysis of the polytope of feasible solutions to the GMDTSP in the next section (see proposition \ref{prop:path}).

We note that our formulation allows for a feasible solution with paths connecting two depots and visiting exactly one customer. We refer to such paths as 2-paths. As the formulation allows for two copies of an edge between a depot and a target, 2-paths can be eliminated and therefore there always exists an optimal solution which does not contain any 2-path. In the following subsection, we prove polyhedral results and derive classes of facet-defining inequalities for the model in \eqref{eq:degree}-\eqref{eq:yinteger}.

\section{Polyhedral analysis \label{sec:polyhedral}}
In this section we analyse the facial structure of the GMDTSP polytope while leveraging the results already known for the multiple depot traveling salesmen problem (MDTSP). 

If the number of targets $|T|=n$ and the number of depots $|D|=k$, then the number of $x_e$ variables is $|E| = \binom{n}{2}+nk$ ($\binom{n}{2}$ is the number of edges between the targets and $nk$ is the number of edges between targets and depots). Also the number of $y_i$ variables is $|T|=n$ and hence, the total number of variables used in the problem formulation is $|E|+|T| = \binom n2 + nk + n$. Let $P$ and $Q$ denote the GMDTSP and MDTSP as follows:
\begin{flalign}
P &:= \text{conv}\{(\mathbf{x,y})\in \mathbb{R}^{|E|+|T|}: (\mathbf{x,y}) \text{ is a feasible GMDTSP solution} \} \text{ and }\label{eq:P} \\
Q &:= \{(\mathbf{x,y})\in P: y_v=1 \text{ for all } v\in T \}. \label{eq:Q} 
\end{flalign}
The dimension of the polytope $Q$ was shown to be $\binom n2 + n(k-1)$ in \cite{Benavent2013}. To relate the polytopes $P$ and $Q$, we define an intermediate polytope $P(F)$ as follows:
\begin{flalign}
P(F):= \{(\mathbf{x,y})\in P: y_v = 1 \text{ for all } v \in F\}, \label{eq:PF}
\end{flalign}
where $\emptyset \subseteq F \subseteq T$. Observe that $P(\emptyset) = P$ and $P(T) = Q$. Now, we determine the dimension of the polytope $P(F)$. The number of variables in the equation system for $P(F)$ is $|E|+|T| = \binom n2 + nk + n$. The system also includes $|T|=n$ linear independent equations in \eqref{eq:degree} and variable fixing equations given by $$y_v=1 \text{ for all } v\in F\cup W$$ where,  $W$ is the set of targets that lie in clusters that are singletons (defined in Sec. \ref{sec:Formulation}). The following lemma gives the dimension of $P(F)$.
\begin{lemma} \label{lem:PFdim}
For all $F\subseteq T$, $\operatorname{dim}(P(F)) = \binom n2 + nk - |F\cup W|$.
\end{lemma}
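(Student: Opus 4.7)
The plan is to prove $\operatorname{dim}(P(F)) = \binom{n}{2} + nk - |F \cup W|$ by a matching upper and lower bound. For the upper bound I will observe that every point in $P(F)$ satisfies the $n$ degree equations \eqref{eq:degree} together with the $|F \cup W|$ fixing equations $y_v = 1$ for $v \in F \cup W$. Linear independence of this system is immediate: each degree equation contains a distinct $y_i$ with coefficient $-2$ that appears in no other degree equation, and the fixing equations involve only the $y$-variables indexed by $F \cup W$. Since the ambient space has dimension $|E| + |T| = \binom{n}{2} + nk + n$, this yields $\operatorname{dim}(P(F)) \le \binom{n}{2} + nk - |F \cup W|$.

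For the matching lower bound I plan to bootstrap from the known equality $\operatorname{dim}(Q) = \binom{n}{2} + n(k-1)$ established in \cite{Benavent2013}. Since $Q = P(T) \subseteq P(F)$, there exist $N+1 := \binom{n}{2} + nk - n + 1$ affinely independent points $p_0,\dots,p_N \in Q$, all lying in $P(F)$. To boost the count to the desired $\binom{n}{2} + nk - |F \cup W| + 1$, I will construct for each $v \in T \setminus (F \cup W)$ one additional feasible point $q^v \in P(F)$ with $y^v_v = 0$ and $y^v_w = 1$ for every other $w \in T$. Such a $q^v$ is available because $v \notin W$ forces $|C_{h(v)}| \ge 2$, so dropping $v$ preserves \eqref{eq:assignment}; an explicit feasible routing on $(T\setminus\{v\}) \cup D$ can always be produced, e.g.\ a single Hamiltonian cycle through $T\setminus\{v\}$ based at one depot, with the remaining depots either idle or using the degenerate depot--target--depot 2-paths permitted by \eqref{eq:xinteger2}.

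Affine independence of the combined collection then reduces to a clean block-triangular check on $\mathbf{y}$-coordinates. Each difference $p_i - p_0$ has $\mathbf{y}$-component $\mathbf{0}$ because every $p_i \in Q$ satisfies $\mathbf{y} = \mathbf{1}$, whereas each $q^v - p_0$ has $\mathbf{y}$-component equal to $-e_v$. Hence in any vanishing linear combination $\sum_i \alpha_i(p_i - p_0) + \sum_v \beta_v(q^v - p_0) = 0$, projecting to the $\mathbf{y}$-coordinates forces $\beta_v = 0$ for all $v$ (since the $\{-e_v\}_{v \in T \setminus (F \cup W)}$ are linearly independent), and then affine independence of $p_0,\dots,p_N$ inside $Q$ forces $\alpha_i = 0$. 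Counting, I obtain $(N+1) + (n - |F \cup W|) = \binom{n}{2} + nk - |F \cup W| + 1$ affinely independent points in $P(F)$, matching the upper bound.

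The main obstacle is the feasibility of the auxiliary points $q^v$: one has to verify carefully that a GMDTSP tour visiting exactly $T \setminus \{v\}$ exists inside $P(F)$, which is exactly where the hypothesis $v \notin W$ is used to keep \eqref{eq:assignment} satisfied, and where the degenerate cycles allowed by \eqref{eq:xinteger2} are needed to cover edge cases such as $|T \setminus \{v\}| < k$ or isolated depots. Once these feasibility details are pinned down, the remainder of the argument is routine polyhedral dimension bookkeeping.
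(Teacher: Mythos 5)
Your proposal is correct and follows essentially the same route as the paper: the same upper bound from the $n$ degree equations plus the $|F\cup W|$ fixing equations, and the same lower bound obtained by starting from the $\binom n2 + nk - n + 1$ affinely independent points of $Q=P(T)$ and adjoining, for each $v\in T\setminus(F\cup W)$, a feasible solution that skips exactly the target $v$. The only difference is presentational --- you verify affine independence of all the added points at once via the $\mathbf{y}$-coordinate projection, whereas the paper adds them one at a time by induction on $|F|$; the underlying argument (each new point violates $y_v=1$, which all previous points satisfy) is identical.
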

\begin{proof} 
Since the equation system for $P(F)$ has $\binom n2 + nk + n$ variables and $n+|F\cup W|$ linear independent equality constraints, the $\operatorname{dim}(P(F)) \leq \binom n2 + nk - |F\cup W|$. We claim that $P(F)$ contains $\binom n2 + nk - |F\cup W| + 1$ affine independent points. The claim proves $\operatorname{dim}(P(F)) \geq \binom n2 + nk - |F\cup W|$. Hence, the lemma follows. We prove the claim by induction on the cardinality of the set $F$.

For the base case, we have $F = T$ and $P(T)=Q$ where $Q$ is the the MDTSP polytope. Since $\operatorname{dim}(Q) = \binom n2 + nk - n$ (\cite{Benavent2013}), there are $\binom n2 + nk - n + 1$ affine independent points in $Q$. Assume that the claim holds for a set $F_i$ with $|F_i| = i$ and $i>0$, and consider a subset of targets $F_{i-1}$ such that $|F_{i-1}|=i-1$. Let $v$ be any target not in $F_{i-1}$, and define $F_i:= F_{i-1}\cup\{v\}$. The induction hypothesis provides $\binom n2 + nk - |F_i\cup W| + 1$ affine independent points belonging to $P(F_i)$ and hence, to $P(F_{i-1})$ (since $P(F_i)\subseteq P(F_{i-1})$). If $v\in W$, then $|F_{i-1}\cup W| = |F_i\cup W|$ and we are done. Otherwise, $|F_{i-1}\cup W| = |F_i\cup W| - 1$ and we need an additional point on the polytope $P(F_{i-1})$ that is affine independent with the rest of the $\mathcal L = \binom n2 + nk - |F_i\cup W| + 1$ points. All these $\mathcal L$ points satisfy the equation $y_v = 1$. An additional point that is affine independent with the $\mathcal L$ points always exists and is given by any feasible MDTSP solution in the subgraph induced by the set of vertices $(T\cup D)\setminus \{v\}$ because, any feasible MDTSP solution on the set of vertices $(T\cup D)\setminus \{v\}$ satisfies $y_v = 0$. \qed
\end{proof}
\begin{corollary}
$\operatorname{dim}(P) = \binom n2 + nk - |W|$.
\end{corollary}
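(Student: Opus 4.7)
The plan is to obtain this corollary as an immediate specialization of Lemma \ref{lem:PFdim}. The lemma establishes the dimension formula $\operatorname{dim}(P(F)) = \binom{n}{2} + nk - |F \cup W|$ for every subset $F \subseteq T$, so the corollary should follow by choosing $F$ appropriately and simplifying.

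Concretely, I would take $F = \emptyset$. By the definition of the intermediate polytope in \eqref{eq:PF}, $P(\emptyset) = P$, since the clause ``$y_v = 1$ for all $v \in F$'' is vacuous. Substituting into the lemma gives $\operatorname{dim}(P) = \operatorname{dim}(P(\emptyset)) = \binom{n}{2} + nk - |\emptyset \cup W| = \binom{n}{2} + nk - |W|$, which is exactly the claimed formula.

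There is no real obstacle here; the only thing to verify is that the inductive proof of Lemma \ref{lem:PFdim} is actually valid when instantiated at $F = \emptyset$, which it is, because the induction peels targets off of $F$ one at a time down to the empty set (the base case being $F = T$, and each step going from $F_i$ to $F_{i-1}$ of strictly smaller cardinality). Thus no separate argument is needed, and the corollary is essentially a one-line consequence of the lemma.
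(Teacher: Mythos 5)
Your proof is correct and matches the paper's (implicit) argument exactly: the corollary is the specialization of Lemma \ref{lem:PFdim} at $F=\emptyset$, using $P(\emptyset)=P$ and $|\emptyset\cup W|=|W|$. Nothing further is needed.
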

Lemma \ref{lem:PFdim} indicates that for any given subset $F\subseteq T$ and $v\in F$, either $\operatorname{dim}(P(F\setminus \{v\})) = \operatorname{dim}(P(F))$ (if $v \in W$) or $\operatorname{dim}(P(F\setminus \{v\})) = \operatorname{dim}(P(F)) + 1$ (when $v \notin W$) \emph{i.e.}, the dimension of the polytope $P(F)$ increases by at most one unit when a target is removed from $F$.  Hence, we can lift any facet-defining valid inequality for $P(F)$ to be facet-defining for $P(F\setminus \{v\})$. In the ensuing proposition, we introduce a result based on the sequential lifting for zero-one programs (\cite{Padberg1975}) which we will use to lift facets of $Q$ into facets of $P$. The proposition generalizes a similar result in \cite{Fischetti1995} used to lift facets of the travelling salesman problem to facets of GTSP. 
\begin{proposition} \label{prop:lifting}
Suppose that for any $F\subseteq T$ and $u\in F$, $$\sum_{e\in E} \alpha_e x_e + \sum_{v\in T} \beta_v (1-y_v) \geq \eta$$ is any facet-defining inequality for $P(F)$. Then the lifted inequality  $$\sum_{e\in E} \alpha_e x_e + \sum_{v\in T\setminus \{u\}} \beta_v (1-y_v) + \bar{\beta}_u(1-y_u) \geq \eta$$ is valid and facet-defining for $P(F\setminus \{u\})$, where $\bar{\beta}_u$ takes an arbitrary value when $u \in W$ and $$\bar{\beta}_u = \eta - \min \left\{\sum_{e\in E} \alpha_e x_e + \sum_{v\in T\setminus \{u\}} \beta_v (1-y_v): (\mathbf{x,y}) \in P(F\setminus \{u\}), y_u=0 \right\} $$ when $u \notin W$. Note that the statement can be trivially modified to deal with ``$\leq$'' inequalities.
\end{proposition}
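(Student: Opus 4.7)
The plan is to split the argument along the dichotomy built into the statement: the degenerate case $u \in W$ and the genuine lifting case $u \notin W$. When $u \in W$, the cluster $C_{h(u)} = \{u\}$ together with constraint \eqref{eq:assignment} forces $y_u = 1$ throughout $P(F \setminus \{u\})$, so $P(F \setminus \{u\}) = P(F)$ as polytopes and Lemma \ref{lem:PFdim} confirms their dimensions agree. The term $\bar{\beta}_u(1-y_u)$ then vanishes identically on $P(F \setminus \{u\})$, so the original facet-defining inequality and its lifted version coincide pointwise, making the choice of $\bar{\beta}_u$ immaterial. The substantive content is in the case $u \notin W$, where $\dim(P(F \setminus \{u\})) = \dim(P(F)) + 1$ by Lemma \ref{lem:PFdim}, and a standard sequential lifting of Padberg type is called for.

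For $u \notin W$, I would establish validity first by splitting points of $P(F \setminus \{u\})$ according to the binary value of $y_u$. If $y_u = 1$, the point lies in $P(F)$ and the lifted inequality reduces to the original, which holds by assumption. If $y_u = 0$, the lifted inequality becomes $\sum_{e\in E}\alpha_e x_e + \sum_{v\in T\setminus\{u\}}\beta_v(1-y_v) + \bar{\beta}_u \geq \eta$, which is exactly the inequality obtained by rearranging the definition of $\bar{\beta}_u$. To upgrade validity to facet-definingness, I would take the $\dim(P(F))+1$ affinely independent tight points guaranteed by the fact that the original inequality is a facet of $P(F)$; these all satisfy $y_u = 1$ and remain tight for the lifted inequality in $P(F \setminus \{u\})$. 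It then suffices to exhibit a single further tight point with $y_u = 0$, for then the augmented family has $\dim(P(F \setminus \{u\})) + 1$ points that are affinely independent because the newcomer differs from all others in its $y_u$-coordinate. Such a point is furnished by any minimizer of the program defining $\bar{\beta}_u$, which makes the lifted inequality tight by construction.

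The main obstacle is ensuring that the minimization problem defining $\bar{\beta}_u$ has a nonempty feasible set, i.e.\ that $\{(\mathbf{x},\mathbf{y}) \in P(F\setminus\{u\}) : y_u = 0\}$ is nonempty; only then is $\bar{\beta}_u$ a well-defined finite number and the extra tight point available. This is precisely where the hypothesis $u \notin W$ is used: because $C_{h(u)} \setminus \{u\} \neq \emptyset$, cluster $C_{h(u)}$ can be covered by some alternative target $w$, and any feasible GMDTSP solution in the subgraph induced by $(T \cup D) \setminus \{u\}$ that visits every vertex of $F \setminus \{u\}$ and some $w \in C_{h(u)} \setminus \{u\}$ provides the required point. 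Once this nonemptiness is in hand, the remainder of the argument is the routine Padberg zero-one sequential lifting adapted to the $(\mathbf{x},\mathbf{y})$-formulation, with the $y_u$-coordinate playing the role of the binary lifting variable.
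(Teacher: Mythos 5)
Your argument is correct and is essentially the paper's approach made explicit: the paper simply defers to the sequential lifting theorem of Padberg (1975), and you have unrolled that theorem in this setting (degenerate case $u\in W$ via the forced equation $y_u=1$; validity by checking integer points with $y_u\in\{0,1\}$; facetness by augmenting the $\dim(P(F))+1$ tight points with a minimizer of the lifting program, whose existence follows from $u\notin W$). Nothing is missing.
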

\begin{proof}
The proof follows from the sequential lifting theorem in \cite{Padberg1975}. \qed
\end{proof}
Proposition \ref{prop:lifting} is used to derive facet-defining inequalities for the GMDTSP polytope $P$ by lifting the facet-defining inequalities for the MDTSP polytope $Q$ in \cite{Benavent2013}. For a given lifting sequence of the set of targets $T$, say $\{v_1,\dots,v_n\}$, the procedure is iteratively applied to derive a facet of $P(\{v_{t+1},\dots,v_n\})$ from a facet of $P(\{v_{t},\dots,v_n\})$ for $t=1,\dots,n$. Different lifting sequences produce different facets; hence the name, \emph{sequence dependent} lifting. In the rest of the section, we use the lifting procedure to check if the constraints in \eqref{eq:degree}-\eqref{eq:yinteger} are facet-defining and derive additional facet-defining inequalities for the GMDTSP polytope.
\begin{proposition} \label{prop:trivial}
The following results hold for the GMDTSP polytope $P$:
\begin{enumerate}
	\item $x_e \geq 0$ defines a facet for every $e \in E$ if $|T|\geq 4$,
	\item $x_e\leq 1$ defines a facet if and only if $e\in \gamma(W)$ and $|T|\geq 3$,
	\item $x_e \leq 2$ does not define a facet for any $e\in (D:T)$,
	\item $y_i\geq 0$ does not define a facet for any $i\in T$,
	\item $y_i\leq 1$ defines a facet if and only if $i\notin W$, and
	\item $\sum_{i\in C_h} y_{i} \geq 1$ does not define a facet for any $h\in \{1,\dots,m\}$. 
\end{enumerate}
\end{proposition}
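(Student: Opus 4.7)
The strategy is to dispatch the six sub-claims by two complementary techniques. Claims (1), (2), and (5), which assert facethood, I would prove by invoking the sequential lifting of Proposition~\ref{prop:lifting}, starting from the corresponding facet of the MDTSP polytope $Q$ established in \cite{Benavent2013} and checking that at every step the lifting coefficient $\bar{\beta}_u$ can be taken to be zero. Claims (3), (4), and (6), which deny facethood, I would prove by exhibiting, on the candidate face, at least one additional linearly independent implied equation, thereby forcing the codimension in $P$ to exceed one.

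For the lifting arguments the template is the same in all three cases. For claim (1) I would start from the MDTSP facet $x_e\geq 0$ (valid for $|T|\geq 4$) and lift along an arbitrary ordering of $T$; at each non-$W$ target $u$ the coefficient is $\bar{\beta}_u = -\min\{x_e : (\mathbf{x},\mathbf{y})\in P(F\setminus\{u\}),\ y_u=0\}$, which is $0$ because one can always produce a GMDTSP tour that omits $u$ and avoids edge $e$. Claim (5) follows analogously: starting from $P(T\setminus\{i\})$, in which $y_i\leq 1$ is facet-defining, lifting over each $u\notin W\cup\{i\}$ produces $\bar{\beta}_u=0$ via any feasible tour that skips $u$ and visits $i$. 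Claim (2) has two directions. For the ``only if'' direction, if $e=(i,j)\in\gamma(T)$ and $i\notin W$, then on $\{x_e=1\}\cap P$ any integer decomposition $(\mathbf{x},\mathbf{y})=\sum_k \lambda_k (\mathbf{x}^k,\mathbf{y}^k)$ must have $x^k_e=1$ for every $k$ in the support, whence $y^k_i=1$ and thus $y_i=1$ on the whole face; combined with $x_e=1$, the two are linearly independent additions to the equality system (since $i\notin W$), forcing codimension at least two. For the ``if'' direction, $e\in\gamma(W)$, I would lift the MDTSP facet $x_e\leq 1$ exactly as in (1), obtaining $\bar{\beta}_u=0$ from a tour that skips $u$ and uses $e$.

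For the non-facet claims I would argue directly from the equality system of $P$ and the non-negativity of $\mathbf{x}$. For (3), $x_e=2$ with $e=(d,j)\in(D:T)$ together with $x(\delta(j))=2y_j$ and $y_j\leq 1$ forces $y_j=1$ and $x_{e'}=0$ for every $e'\in\delta(j)\setminus\{e\}$, producing many independent extra equations. For (4), $y_i=0$ combined with the degree equation yields $x(\delta(i))=0$, and with $\mathbf{x}\geq 0$ this forces $x_e=0$ for every $e\in\delta(i)$; when $i\in W$ the face is empty. For (6), summing the degree equations over $i\in C_h$ yields the identity $x(\delta(C_h))+2\,x(\gamma(C_h))=2\sum_{i\in C_h} y_i$; the inequality $x(\delta(C_h))\geq 2$ is valid on $P$ (it holds on every integer feasible solution and therefore on their convex hull), and so on the face $\{\sum_{i\in C_h} y_i=1\}$ one deduces $x(\gamma(C_h))=0$, i.e.\ $x_e=0$ for every $e\in\gamma(C_h)$; this is an extra equation when $|C_h|\geq 2$, whereas for $|C_h|=1$ the inequality is already implicit via $y_i=1$ for the unique $i\in W$.

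The main obstacle I anticipate is verifying that $\bar{\beta}_u=0$ uniformly along the lifting sequences for (1), (2), and (5): for every non-$W$ target $u$ encountered one must produce an explicit GMDTSP-feasible tour on $V\setminus\{u\}$ that covers every other cluster and simultaneously assigns the correct value to the edge or variable being lifted. The cardinality hypotheses $|T|\geq 3$ and $|T|\geq 4$ enter precisely to guarantee such witness tours exist throughout the induction. The remaining work consists of routine linear-independence checks of implied equations and dimension bookkeeping against Lemma~\ref{lem:PFdim}.
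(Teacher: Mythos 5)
Your proposal is correct and follows the paper's overall skeleton: the facet claims (1), (2), (5) are obtained by sequentially lifting the corresponding MDTSP facets via Proposition~\ref{prop:lifting} with all lifting coefficients equal to zero, exactly as the paper does for (1). The main divergence is in how the negative claims are handled. The paper dispatches (3), (4) and the first half of (2) by exhibiting a \emph{dominating} valid inequality ($x_e\leq 2y_i$, $y_i\geq\tfrac12 x_{e'}$, $x_e\leq y_i$), and for (3) relies on the rather terse remark that $x_e\leq 2$ is merely a face of $Q$; you instead show directly that the candidate face satisfies additional linearly independent equations ($y_j=1$ and $x_{e'}=0$ for $e'\in\delta(j)\setminus\{e\}$, etc.), forcing codimension at least two. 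The two techniques are interchangeable here, but your treatment of (3) is actually more self-contained than the paper's one-liner, since ``not a facet of $Q$'' does not by itself preclude being a facet of $P$ (witness $y_i\leq 1$, which induces an improper face of $Q$ yet is a facet of $P$ for $i\notin W$). For (2) the paper invokes Proposition~\ref{prop:gsec} with $S=\{i,j\}$ rather than lifting $x_e\leq 1$ directly, but these coincide because the MDTSP subtour inequality for $S=\{i,j\}$ \emph{is} $x_e\leq 1$; for (5) the paper reads the answer off Lemma~\ref{lem:PFdim} rather than re-running the lifting. Two small loose ends in your write-up, neither fatal: in (5) you only argue the ``if'' direction (the ``only if'' is the one-line observation that $y_i=1$ on all of $P$ when $i\in W$, so the face is improper), and in (2) you do not note that for $e\in(D:T)$ the inequality $x_e\leq 1$ is not even valid, which is why only $e\in\gamma(T)$ needs discussion.
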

\begin{proof} We use the facet-defining results of the MDTSP polytope (\cite{Benavent2013}) in conjunction with Proposition \ref{prop:lifting} to prove (1)--(3).
\begin{enumerate}
\item Observe that for every $e \in E$, $x_e\geq 0$ defines a facet of the MDTSP polytope $Q$ if $|T| \geq 4$. Now for any lifting sequence, Proposition \ref{prop:lifting} produces $\bar{\beta}_v = 0$ for all $v\in T$ and the result follows.
\item Suppose that $e = (i,j)$. If $i,j\in W$ and $|T|\geq 3$, then the claim follows from the forthcoming Proposition \ref{prop:gsec} by choosing $S = \{i,j\}$. Otherwise if $e=(i,j)\in \gamma(T)$, then $x_e \leq 1$ is dominated by $x_e \leq y_i$ if $i \notin W$ and $x_e \leq y_j$ if $j \notin W$.
\item Let $e = (d,i)$ where $d\in D, i\in T$. $x_e \leq 2$ defines a face of the MDTSP polytope $Q$. Hence neither of the lifted versions of the inequality \emph{i.e.}, $x_e \leq 2$ (if $i \in W$) or $x_e \leq 2y_i$ (if $i\notin W$) defines a facet of $P$.
\item The inequality $y_i \geq \frac 12 x_e$ for $e \in \delta(i)$ dominates $y_i \geq 0$. Hence, $y_i\geq 0$ does not define a facet for any $i\in T$.
\item Observe that the valid inequality $y_i\leq 1$ induces a face, $P(\{i\}) = \{(\mathbf{x,y})\in P: y_i = 1\}$ of $P$. From the Lemma \ref{lem:PFdim}, $\operatorname{dim}(P(\{i\})) = \operatorname{dim}(P) - 1$ if and only if $i\notin W$. Hence, $y_i \leq 1$ is facet-defining for $P$ if and only if $i\notin W$. When $i\in W$, the inequality defines an improper face. 
\item The constraint $\sum_{i\in C_h} y_{i} \geq 1$ can be reduced to $\sum_{e\in \delta(C_h)} x_e + 2\sum_{e\in \gamma(C_h)} x_e \geq 2$ using the degree constraints in \eqref{eq:degree}.  When $\gamma(C_h) \neq \emptyset$, the constraint $\sum_{e\in \delta(C_h)} x_e + 2\sum_{e\in \gamma(C_h)} x_e \geq 2$ is dominated by $\sum_{e\in \delta(C_h)} x_e \geq 2$. When $\gamma(C_h) = \emptyset$ (\emph{i.e.,} $|C_h| = 1$), the constraint $\sum_{e\in \delta(C_h)} x_e = 2$ is satisfied by any feasible solution in $P$ and hence in this case, it is an improper face. Therefore, $\sum_{i\in C_h} y_{i} \geq 1$ does not define a facet for any $h\in \{1,\dots,m\}$. \qed
\end{enumerate}
\end{proof}
In the next proposition, we prove that the sub-tour elimination constraints in Eq. \eqref{eq:sec} define facets of $P$. To do so, we apply the lifting procedure in Proposition \ref{prop:lifting} to the MDTSP sub-tour elimination constraints $$x(\delta(S)) \geq 2 \text{ for all } S\subseteq T.$$ In the process, we derive alternate versions of the sub-tour elimination constraints in Eq. \eqref{eq:sec} which we will refer to as the generalized sub-tour elimination constraints (GSEC). To begin with, we observe that sub-tour elimination constraints given above define facets of the MDTSP poytope $Q$ when $|T|\geq 3$ (see \cite{Benavent2013}). 
\begin{proposition} \label{prop:gsec}
Let $S\subseteq T$ and $|T|\geq 3$. Then the following \emph{Generalized Sub-tour Elimination Constraint} (GSEC) is valid and facet-defining for $P$: $$x(\delta(S)) + \bar{\beta}_i (1-y_i) \geq 2 \text{ for } i\in S,$$ where $$\bar{\beta}_i = \begin{cases} 2 & \text{ if } \mu(S)=0, \\ 0 & \text{ otherwise};\end{cases}$$ $\mu(S)$ is defined as $\mu(S) = |\{h:C_h \subseteq S\}|$.
\end{proposition}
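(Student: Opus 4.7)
The plan is to apply the sequential lifting of Proposition \ref{prop:lifting} to the MDTSP sub-tour elimination constraint $x(\delta(S)) \geq 2$, which defines a facet of $Q = P(T)$ whenever $|T|\geq 3$ (see \cite{Benavent2013}). I choose the lifting sequence that processes the designated target $i$ \emph{last}, and the targets in $T\setminus\{i\}$ in any order beforehand. At each stage the induction hypothesis will be that the partially lifted inequality is still exactly $x(\delta(S)) \geq 2$, i.e.\ every coefficient lifted so far is zero.

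At an intermediate step, let $F \subsetneq T$ with $i \in F$ and let $v \in F\setminus\{i\}$ be the next target to lift (assume $v \notin W$, otherwise the coefficient is arbitrarily set to zero). Proposition \ref{prop:lifting} then gives
$$\bar{\beta}_v \;=\; 2 \;-\; \min\bigl\{\, x(\delta(S)) \;:\; (\mathbf{x,y})\in P(F\setminus\{v\}),\ y_v=0 \,\bigr\}.$$
Since $i \in F\setminus\{v\}$, every feasible point of this minimization has $y_i = 1$, so target $i\in S$ is visited; the cycle through $i$ crosses $\delta(S)$ at least twice, yielding $x(\delta(S)) \geq 2$. The bound is attained by any solution in which a single cycle through one depot visits the targets of $S\setminus\{v\}$ in one excursion while the remaining depots cover $T\setminus S$. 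Hence $\bar{\beta}_v = 0$, and by induction every lifted coefficient for $v \neq i$ vanishes.

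The final step lifts $i$ from $F = \{i\}$ to $F = \emptyset$, producing the full polytope $P$. Here
$$\bar{\beta}_i \;=\; 2 \;-\; \min\bigl\{\, x(\delta(S)) \;:\; (\mathbf{x,y}) \in P,\ y_i = 0 \,\bigr\},$$
and I split on $\mu(S)$. If $\mu(S) \geq 1$, choose a cluster $C_h \subseteq S$; constraint \eqref{eq:assignment} forces some target of $C_h$ (and hence of $S$) to be visited even when $y_i = 0$, so the minimum is $2$, attained by a solution whose only cycle in $S$ visits exactly one such target, giving $\bar{\beta}_i = 0$. (If moreover $i \in W$, then $\{i\}$ is itself a cluster in $S$, so $\mu(S)\geq 1$ automatically, and the arbitrary-value clause of Proposition \ref{prop:lifting} is consistent with the value $0$.) If $\mu(S) = 0$, no cluster lies entirely in $S$, so from each $C_h$ we can pick a representative in $C_h\setminus S$; routing one depot's cycle through exactly these representatives and leaving the remaining depots as empty (degenerate) cycles yields a feasible GMDTSP solution with $x(\delta(S)) = 0$, so the minimum is $0$ and $\bar{\beta}_i = 2$.

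The crux of the argument is the attainability of the two minima at the final lift: the $\mu(S)=0$ case is the delicate one, since it is exactly the hypothesis $\mu(S)=0$ that lets every cluster be covered by a target in $T\setminus S$ and hence allows a feasible tour that avoids $S$ entirely. Once these minima are computed, Proposition \ref{prop:lifting} simultaneously delivers validity and the facet-defining property of the GSEC for $P$, completing the proof.
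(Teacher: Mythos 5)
Your proof is correct and follows essentially the same route as the paper's: sequential lifting of the MDTSP inequality $x(\delta(S))\geq 2$ with the designated target $i$ lifted last, zero coefficients for all intermediate targets, and a case split on $\mu(S)$ to evaluate the final lifting minimum. You merely supply more detail than the paper (which dismisses the intermediate coefficients as ``trivial''), and your handling of the $i\in W$ degenerate case is a welcome clarification rather than a deviation.
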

\begin{proof}
We first observe that the inequality $x(\delta(S)) \geq 2$ with $S\subseteq T$ and $|T|\geq 3$ defines a facet for the MDTSP polytope. We lift this inequality using the lifting procedure in Proposition \ref{prop:lifting}. Let $\{v_1,\dots,v_n\}$ be any lifting sequence of the set of targets such that $v_n = i$. The lifting coefficients $\bar{\beta}_{v_t}$ are computed iteratively for $t=1,\dots,n$. For $t=1,\dots,n-1$, it is trivial to see that $\bar{\beta}_{v_t} = 0$. Hence, $x(\delta(S)) \geq 2$ defines a facet of $P(\{v_n\})$. As to $\bar{\beta}_{v_n}$, we compute its value by performing the lifting procedure again and obtain a facet of $P$. We have $$\bar{\beta}_{v_n} = 2 - \min \left\{x(\delta(S)): (\mathbf{x,y}) \in P, \text{ and } y_{v_n} = 0\right\}.$$ Solving for $\bar{\beta}_{v_n}$ using the above equation, we obtain $\bar{\beta}_{v_n} = 2$ if a feasible GMDTSP solution visiting no target in $S$ exists (\emph{i.e.}, no $C_h \subseteq S$ exists) and $\bar{\beta}_{v_n} = 0$ otherwise. \qed
\end{proof}
In summary, the Proposition \ref{prop:gsec} results in the following facet-defining inequalities of $P$: suppose $S\subseteq T$ with $|T|\geq 3$. Then,
\begin{flalign}
x(\delta(S)) &\geq 2 \text{ for } \mu(S)\neq 0 \text{ and } \label{eq:sec1} \\
x(\delta(S)) &\geq 2y_i \text{ for } \mu(S) = 0,\, i\in S. \label{eq:sec2} 
\end{flalign}
Note that the inequality $x(\delta(S)) \geq 2y_i$ is valid for any $S \subseteq T$. It is facet-defining for $P$ only when $\mu(S) \neq 0$. When $\mu(S)=0$ it does not define a facet of $P$ as it is dominated by Eq. \eqref{eq:sec1}. Using the degree constraints in Eq. \eqref{eq:degree}, the above GSECs can rewritten as 
\begin{flalign}
x(\gamma(S)) &\leq \sum_{v\in S} y_v - 1 \text{ for } \mu(S)\neq 0 \text{ and } \label{eq:sec1a} \\
x(\gamma(S)) &\leq \sum_{v\in S\setminus\{i\}} y_v \text{ for } \mu(S) = 0,\, i\in S. \label{eq:sec2a} 
\end{flalign}
In the forthcoming two propositions, we prove that the path elimination constraints in Eq. \eqref{eq:4path} and \eqref{eq:path} are facet-defining of $P$ using Proposition \ref{prop:lifting}. The corresponding path elimination constraints for the MDTSP polytope $Q$ are as follows: suppose that $j,k \in T$, $D'\subset D$ with $D'\neq \emptyset$, then
\begin{flalign}
 x(D':\{j\})+3x_{jk}+x(\{k\}:D\setminus D')&\leq 4 \label{eq:4pathmdtsp} \\
 x(D':\{j\})+2x(\gamma(S\cup\{j,k\}))+x(\{k\}:D\setminus D')&\leq 2|S|+3 \text{ for } S\subseteq T\setminus\{j,k\},S\neq\emptyset \label{eq:pathmdtsp} 
\end{flalign}
We remark that Eq. \eqref{eq:4pathmdtsp} and \eqref{eq:pathmdtsp} define facets for the MDTSP polytope $Q$ (see \cite{Benavent2013}).
\begin{proposition} \label{prop:4path}
Suppose $j,k\in T$ and $D' \subset D$ with $D'\neq \emptyset$. Then the following path elimination constraint is valid and facet-defining for $P$: $$x(D':\{j\})+3x_{jk}+x(\{k\}:D\setminus D') + \bar{\beta}_j (1-y_j) + \bar{\beta}_k (1-y_k) \leq 4$$ where $\bar{\beta}_j = \bar{\beta}_k = 2$.
\end{proposition}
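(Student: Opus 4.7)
The plan is to obtain this inequality by sequentially lifting the MDTSP path-elimination facet \eqref{eq:4pathmdtsp}, which by \cite{Benavent2013} is facet-defining for $Q=P(T)$. I will apply Proposition \ref{prop:lifting} (in its $\leq$ form) once for each target, processing the targets of $T\setminus\{j,k\}$ in any order first and then lifting $j$ and $k$ last. After $n$ such steps the result is a facet of $P(\emptyset)=P$; because the final coefficients are order-independent here, any other sequence would yield the same lifted inequality.

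For any intermediate step in which a target $v\in T\setminus\{j,k\}$ is being lifted, the current LHS still reads $x(D':\{j\})+3x_{jk}+x(\{k\}:D\setminus D')$ because every coefficient computed so far is zero by induction. The $\leq$-form of Proposition \ref{prop:lifting} then gives
$$\bar{\beta}_v \;=\; 4 \;-\; \max\bigl\{x(D':\{j\})+3x_{jk}+x(\{k\}:D\setminus D'):(\mathbf{x,y})\in P(F\setminus\{v\}),\,y_v=0\bigr\}.$$
The upper bound $4$ is automatic from the validity of \eqref{eq:4pathmdtsp}, and I can attain it with a GMDTSP solution built from the two degenerate 2-paths $(d_1,j,d_1)$, $d_1\in D'$, and $(d_2,k,d_2)$, $d_2\in D\setminus D'$, padded by routes on the remaining depots that cover any cluster still requiring service (in particular providing a witness for $C_{h(v)}$ different from $v$). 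This gives $\bar{\beta}_v=0$. If $v\in W$ the coefficient is arbitrary by the exception in Proposition \ref{prop:lifting}, and I simply fix it at $0$ so as to match the target expression.

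For the final two steps, setting $y_j=0$ forces $x(\delta(j))=0$ through \eqref{eq:degree}, so the $x(D':\{j\})$ and $3x_{jk}$ terms vanish and the LHS collapses to $x(\{k\}:D\setminus D')\leq 2y_k\leq 2$. The value $2$ is realised by the degenerate cycle $(d_2,k,d_2)$, $d_2\in D\setminus D'$, complemented by any feasible routing of the remaining depots through the still-uncovered clusters, yielding $\bar{\beta}_j=4-2=2$. The symmetric argument with the roles of $j$ and $k$ swapped produces $\bar{\beta}_k=2$, completing the lifted facet.

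The main obstacle I expect is not the lifting arithmetic, which is elementary once the formula is set up, but the bookkeeping for the auxiliary GMDTSP solutions that attain each maximum: at every lifting step I need a feasible solution that simultaneously omits the target being lifted, keeps $y_v=1$ on every still-restricted target, and uses specific depots in 2-paths through $j$ and $k$. These constructions rely on the same nondegeneracy conditions (sufficiently many depots, clusters with at least one witness outside $W$, etc.) that underlie the MDTSP facet result in \cite{Benavent2013}, and carefully checking that they go through in every case is where the real work of the proof lives.
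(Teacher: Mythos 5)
Your proof follows the paper's argument exactly: sequentially lift the MDTSP facet \eqref{eq:4pathmdtsp} via Proposition \ref{prop:lifting}, lifting the targets of $T\setminus\{j,k\}$ first (coefficient $0$) and then $j$ and $k$ last (coefficient $2$ each); the paper merely exhibits the tight solutions in Fig.~\ref{fig:4path} instead of writing them out. One small repair to your witness for the intermediate steps: when $|D|=2$ there are no ``remaining depots'' to carry the targets whose $y$-values are still fixed to $1$, so instead of two degenerate cycles use a single cycle $(d_1, j, k, v_{t+1},\dots, d_1)$ with $d_1\in D'$, which still attains the value $4$ via $x_{d_1j}=1$ and the $3x_{jk}$ term while visiting all still-required targets.
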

\begin{proof}
Let $\{v_1,\dots,v_n\}$ be any lifting sequence of the set of targets such that $v_{n-1} = j$ and $v_n = k$. The lifting coefficients are iteratively computed for $t=1,2,\dots,n$. Coefficients $\bar{\beta}_v$ for $v\in\{v_1,\dots,v_{n-2}\}$ are easily computed (tight GMDTSP solution is depicted in Fig. \ref{fig:4path}(a), showing that the value of $\bar{\beta}_v$ cannot be increased without producing a violated inequality).
\begin{figure}
\centering
	\begin{minipage}[t]{.45\textwidth} \centering
		\includegraphics[scale=1]{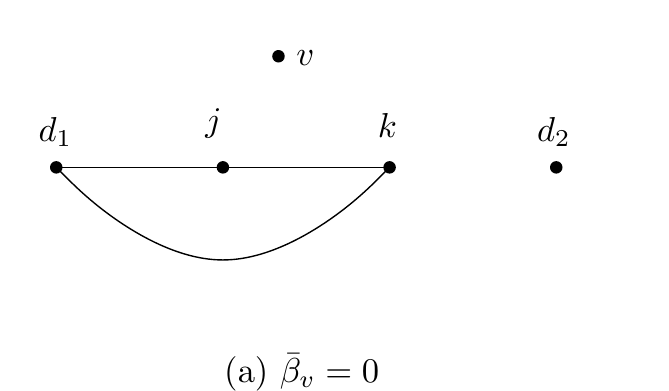}
	\end{minipage}\hfill
	\begin{minipage}[t]{.45\textwidth} \centering
		\includegraphics[scale=1]{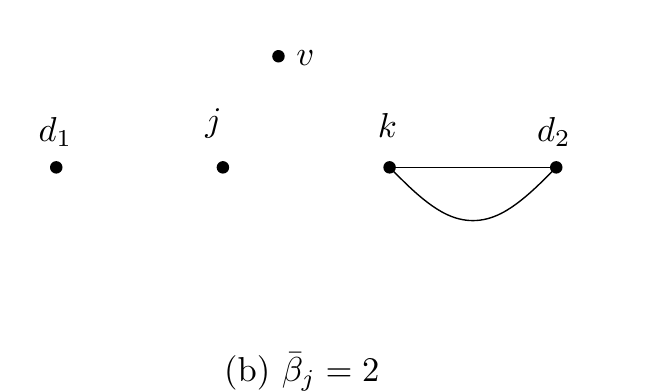}
	\end{minipage}
\caption{Tight feasible solutions for proof of Proposition \ref{prop:4path}. The vertices $d_1$ and $d_2$ are depots and $j,k,$ and $v$ are targets.} \label{fig:4path}
\end{figure}
Similarly for $t=n-1$ \emph{i.e.}, $v_t = j$, the correctness of the coefficient $\bar{\beta}_j = 2$ can be checked with the help of Fig. \ref{fig:4path}(b). Analogously, we obtain $\bar{\beta}_{k} = 2$. \qed
\end{proof}
The inequality in Proposition \ref{prop:4path} can be rewritten as $x(D':\{j\})+3x_{jk}+x(\{k\}:D\setminus D')\leq2(y_{j}+y_{k})$ which is the path elimination constraint in Eq. \eqref{eq:4path}. We have proved that this inequality is valid and defines a facet of $P$. 
\begin{figure}[htbp]
\centering
	\begin{minipage}[t]{.5\textwidth} \centering
		\includegraphics[scale=1]{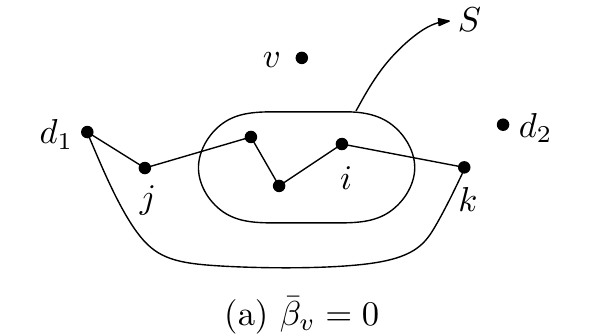}
	\end{minipage}\hfill
	\begin{minipage}[t]{.5\textwidth} \centering
		\includegraphics[scale=1]{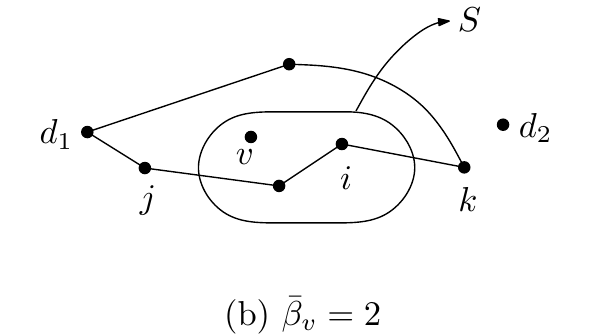}
	\end{minipage} \\
	\begin{minipage}[t]{.5\textwidth} \centering
		\includegraphics[scale=1]{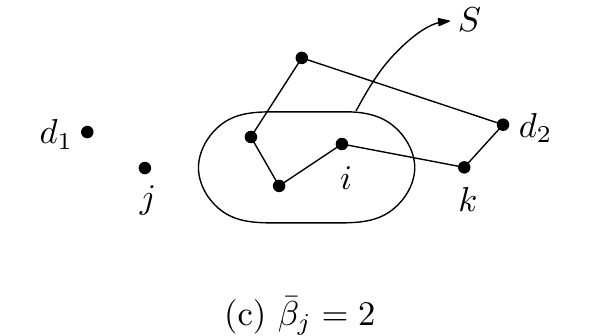}
	\end{minipage}\hfill
	\begin{minipage}[t]{.5\textwidth} \centering
		\includegraphics[scale=1]{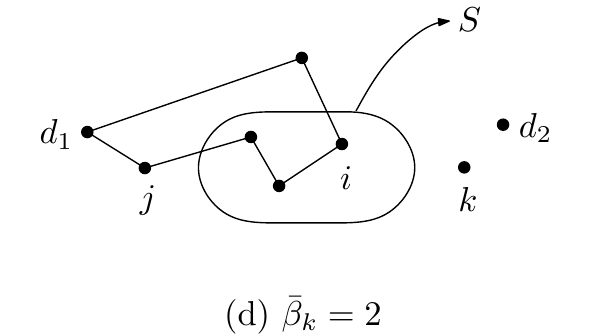}
	\end{minipage}\\
	\begin{minipage}[t]{.5\textwidth} \centering
		\includegraphics[scale=1]{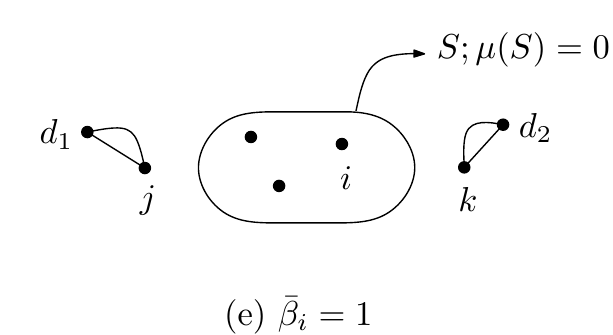}
	\end{minipage}\hfill
	\begin{minipage}[t]{.5\textwidth} \centering
		\includegraphics[scale=1]{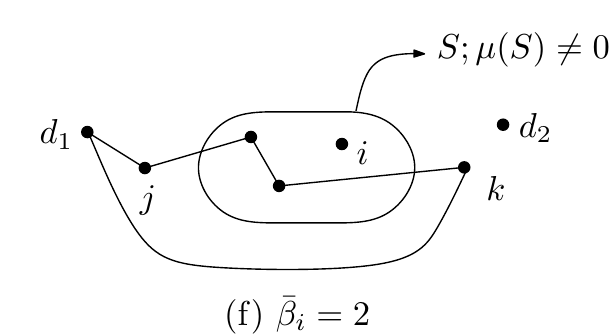}
	\end{minipage}
\caption{Tight feasible solutions for proof of Proposition \ref{prop:path}.} 
\label{fig:path}
\end{figure}
\begin{proposition} \label{prop:path}
Let $j,k\in T$, $D' \subset D$,  $S\subseteq T\setminus \{j,k\}$ and $i\in S$ such that $D'\neq \emptyset$ and $S\neq \emptyset$. Also let $\bar S = S\cup \{j,k\}$. Then the following \emph{Generalized Path Elimination Constraint} (PSEC) is valid and facet-defining for $P$: $$x(D':\{j\})+2x(\gamma(\bar S))+x(\{k\}:D\setminus D') + \sum_{v\in T} \bar{\beta}_v (1-y_v) \leq 2|S|+3$$ where $$\bar{\beta}_v = \begin{cases} 0 & \text{ if } v \in T\setminus{\bar S}, \\ 2 & \text{ if } v \in \bar S\setminus\{i\}, \\ 1 & \text{ if } v = i \text{ and } \mu(S) = 0, \\ 2 & \text{ if } v = i \text{ and } \mu(S) \neq 0;\end{cases} $$ $\mu(S)$ is defined as $\mu(S) = |\{h:C_h \subseteq S\}|$.
\end{proposition}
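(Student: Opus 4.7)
The plan is to derive Proposition~\ref{prop:path} by applying the sequential lifting procedure of Proposition~\ref{prop:lifting} to the MDTSP path inequality~\eqref{eq:pathmdtsp}, which is a facet of $Q$. The lifting order I choose is: first every target in $T\setminus\bar S$, then the targets in $\bar S\setminus\{i\}$ in any order, and finally $i$; this ordering isolates the delicate computation of $\bar{\beta}_i$ to the last step.

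For every $v\in T\setminus\bar S$ the lifting formula yields $\bar{\beta}_v=(2|S|+3)-\max\{L_0:\,y_v=0\}$, where $L_0:=x(D':\{j\})+2x(\gamma(\bar S))+x(\{k\}:D\setminus D')$ and every still-unlifted target is forced to $1$. A tight GMDTSP solution attaining $L_0=2|S|+3$ is obtained by removing $v$ from any MDTSP-tight configuration for~\eqref{eq:pathmdtsp} (using a cluster mate of $v$ to satisfy the cluster constraint when $v\notin W$), so $\bar{\beta}_v=0$. For each $v\in\bar S\setminus\{i\}$, lifted after the $T\setminus\bar S$ coefficients (all zero), the relevant maximization includes a bonus $2(1-y_{v'})$ for every $v'\in\bar S\setminus\{i\}$ already processed. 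A short case analysis shows that, under $y_v=0$, the best value of $L_0$ plus bonus equals $2|S|+1$, attained for instance by a single cycle visiting $\bar S\setminus\{v\}$ through one depot of $D$ (Fig.~\ref{fig:path}). Hence $\bar{\beta}_v=2$.

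The critical step is the final lifting of $i$. Define
\[
M \;=\; \max\Bigl\{L_0+\sum_{v\in\bar S\setminus\{i\}}2(1-y_v)\,:\,y_i=0,\ (\mathbf x,\mathbf y)\in P\Bigr\},
\]
so that $\bar{\beta}_i=(2|S|+3)-M$. If $\mu(S)\neq 0$, some cluster $C_h\subseteq S$ forces at least one target in $S\setminus\{i\}$ to be visited; an exhaustive analysis over the possible visited subsets of $\bar S\setminus\{i\}$ (single cycle through all of them; $2$-paths for $j$ or $k$ combined with a cycle covering the remaining vertices of $S$; and so on) yields $M=2|S|+1$, giving $\bar{\beta}_i=2$. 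If $\mu(S)=0$, no cluster lies inside $S$, so feasibility permits $y_v=0$ for every $v\in\bar S\setminus\{i\}$ simultaneously, with the required cluster representatives routed entirely through $T\setminus\bar S$; this choice gives $L_0=0$ and bonus $2(|S|+1)=2|S|+2$, and a parallel case analysis confirms no other configuration exceeds this value, so $\bar{\beta}_i=1$.

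The principal obstacle is this final case analysis. The bonus $2(1-y_v)$ interacts subtly with $L_0$: promoting a target in $\bar S\setminus\{i\}$ from unvisited to visited can increase $L_0$ by at most $2$, exactly offsetting the loss of $2$ from the bonus, so both $M$-values above are achieved at many extremal configurations. One must therefore verify that none of the single-cycle, pure $2$-path, or mixed decompositions of the cycles traversing $\bar S$ pushes the maximum any higher; the representative tight solutions in Fig.~\ref{fig:path} capture the extremal configurations needed at each lifting step.
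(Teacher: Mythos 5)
Your proposal follows essentially the same route as the paper: sequential lifting via Proposition~\ref{prop:lifting} applied to the MDTSP facet \eqref{eq:pathmdtsp}, with the targets of $T\setminus\bar S$ lifted first, then $\bar S\setminus\{i\}$, and $i$ last, the same coefficient values at each stage, and the same case split on $\mu(S)$ for $\bar\beta_i$; the tight configurations you describe are exactly those the paper exhibits in Fig.~\ref{fig:path}. One small caveat: $\mu(S)=0$ does not preclude a cluster contained in $S\cup\{j\}$ or $S\cup\{k\}$, so your witness with all of $\bar S$ unvisited need not be feasible, but a degenerate two-edge cycle on $j$ (resp.\ $k$) to a depot of $D'$ (resp.\ $D\setminus D'$) still attains $M=2|S|+2$, so the conclusion $\bar\beta_i=1$ is unaffected.
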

\begin{proof}
Consider any lifting sequence of the the set of targets $\{v_1,\dots,v_n\}$ such that each target in the set $S\setminus \{i\}$ follows all the targets in the set $|T\setminus \bar S|$ and $v_{n-2}=j$, $v_{n-1}=k$ and $v_n = i$. The coefficients $\bar\beta_v = 0$ for $v\in T\setminus \bar S$ and $\bar\beta_v = 2$ for $v\in S\setminus \{i\}$ are trivial to compute (tight GMDTSP solution is depicted in Fig. \ref{fig:path}(a) and \ref{fig:path}(b) respectively, showing that the value of $\bar{\beta}_v$ cannot be increased without producing a violated inequality). The correctness of coefficients $\bar \beta_j=2$ and $\bar \beta_k=2$ can be checked with the help of Fig. \ref{fig:path}(c) and \ref{fig:path}(d), respectively. 

It remains to compute the value of coefficient $\bar\beta_i$. For computing $\bar\beta_i$, we have to take into account for the possibility of a GMDTSP solution not visiting any target in the set $S$. This can happen when $\mu(S) = 0$. In this case, we obtain $\bar \beta_i = 1$; see Fig. \ref{fig:path}(e). Likewise, when $\mu(S) \neq 0$, any GMDTSP solution has to have at least two edges in $\delta(S)$. This leads to $\bar \beta_i = 2$; tight GMDTSP solution is shown in Fig. \ref{fig:path}(f). \qed
\end{proof}
In summary, the Proposition \ref{prop:path} results in the following facet-defining inequalities of $P$: suppose $j,k\in T$, $D' \subset D$,  $S\subseteq T\setminus \{j,k\}$, $\bar S = S\cup \{j,k\}$ and $i\in S$ such that $D'\neq \emptyset$ and $S\neq \emptyset$, then
\begin{flalign}
x(D':\{j\})+2x(\gamma(\bar S))+x(\{k\}:D\setminus D')&\leq \sum_{v\in \bar S}2y_v - y_i \text{ for } \mu(S)= 0 \text{ and } \label{eq:pec1} \\
x(D':\{j\})+2x(\gamma(\bar S))+x(\{k\}:D\setminus D')&\leq \sum_{v\in \bar S}2y_v - 1 \text{ for } \mu(S)\neq 0. \label{eq:pec2} 
\end{flalign}
We note that the above PSECs can be rewritten in cut-set form as
\begin{flalign}
x(\delta(\bar S)) &\geq x(D':\{j\})+x(\{k\}:D\setminus D')+ y_i \text{ for } \mu(S)= 0 \text{ and } \label{eq:pec1a} \\
x(\delta(\bar S)) &\geq x(D':\{j\})+x(\{k\}:D\setminus D')+ 1 \text{ for } \mu(S)\neq 0. \label{eq:pec2a} 
\end{flalign}
As we will see in the forthcoming section, the GPECs in the above form are more amicable for developing separation algorithms. Next, we examine the comb inequalities that are valid and facet-defining for the MDTSP polytope. These inequalities were initially introduced for the TSP in \cite{Chvatal1973}. These inequalities were extended and proved to be facet-defining for the MDTSP polytope in \cite{Benavent2013}. We define a comb inequality using a comb, which is a family $C = (H,\mathcal T_1, \mathcal T_2,\dots, \mathcal T_t)$ of $t+1$ subsets of the targets; $t$ is an odd number and $t\geq 3$. The subset $H$ is called the handle and the subsets $\mathcal T_1,\dots,\mathcal T_t$ are called teeth. The handle and teeth satisfy the following conditions: 
\begin{enumerate}[i.]
\item $H\cap \mathcal T_i \neq \emptyset \quad \forall i=1,\dots,t,$
\item $\mathcal T_i\setminus H \neq \emptyset \quad \forall i=1,\dots,t,$
\item $\mathcal T_i\cap \mathcal T_j = \emptyset \quad 1\leq i\leq j \leq t$.
\end{enumerate}
The conditions i. and ii. indicate that every tooth $T_i$ intersects the handle $H$ and the condition iii. indicates that no two teeth intersect. We define the size of $C$ as $\sigma(C):= |H| + \sum_{i=1}^t|\mathcal T_i| - \frac{3t+1}{2}$. Then the comb inequality associated with $C$ is given by 
\begin{flalign}
x(\gamma(H)) + \sum_{i=1}^t x(\gamma(\mathcal T_i)) &\leq \sigma(C) \label{eq:comba}
\end{flalign}
The inequality in Eq. \ref{eq:comba} is valid and facet-defining for the MDTSP (see \cite{Benavent2013}). A special case of the comb inequality, called \emph{2-matching} inequality is obtained when $|\mathcal T_i| = 2$ for $i=1,\dots,t$. In the case of a 2-matching inequality, the size of the comb is $\sigma(C) = |H| + \frac{t+1}{2}$. We apply the lifting procedure in Proposition \ref{prop:lifting} to the inequality in \eqref{eq:comba} and obtain facet-defining inequality for the GMDTSP. The following proposition is adapted from \cite{Fischetti1995}; the proof of the proposition is omitted as it is similar to the proof of the corresponding theorem for GTSP in \cite{Fischetti1995}.
\begin{proposition} \label{prop:comb}
Suppose $\mu(S) = |\{h:C_h \subseteq S\}|$ for $S\subseteq T$ and let $C = (H,\mathcal T_1,\dots,\mathcal T_t)$ be a comb. For $i=1,\dots,t$, let $a_i$ be any target in $\mathcal T_i\cap H$ if $\mu(\mathcal T_i\cap H)=0$; $a_i=0$ (a dummy value) otherwise; and let $b_i$ be any target in $\mathcal T_i\setminus H$ if $\mu(\mathcal T_i\setminus H) = 0$; $b_i=0$ otherwise. Then the following comb inequality is valid and facet-defining for the GMDTSP polytope $P$:
\begin{flalign}
x(\gamma(H)) + \sum_{i=1}^t x(\gamma(\mathcal T_i)) + \sum_{v\in T} \bar{\beta}_v (1-y_v) &\leq \sigma(C), \label{eq:combb}
\end{flalign}
where $\bar \beta_v = 0$ for all $v\in T\setminus (H\cup \mathcal T_1 \cup \dots\cup \mathcal T_t)$, $\bar \beta_v = 1$ for all $v\in H\setminus (\mathcal T_1 \cup \dots\cup \mathcal T_t)$ and for $i=1,\dots,t$:
\begin{flalign*}
\bar \beta_v &= 2 \quad \text{for $v\in \mathcal T_i \cap H, v\neq a_i$;} \\
\bar \beta_{a_i} &= 1 \quad \text{if $a_i \neq 0$;} \\
\bar \beta_v &= 1 \quad \text{for $v\in \mathcal T_i \setminus H, v\neq b_i$;} \\
\bar \beta_{b_i} &= 0 \quad \text{if $b_i \neq 0$}.
\end{flalign*}
\end{proposition}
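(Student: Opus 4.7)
The plan is to apply the sequential lifting procedure of Proposition~\ref{prop:lifting}, in its ``$\leq$'' form, to the MDTSP comb inequality~\eqref{eq:comba}, which is facet-defining for $Q = P(T)$. Starting from $F = T$, targets are removed one at a time; at each step the new coefficient is
\[
\bar\beta_u \;=\; \sigma(C) \;-\; \max\Bigl\{x(\gamma(H)) + \sum_{i=1}^t x(\gamma(\mathcal T_i)) + \sum_{v\prec u}\bar\beta_v(1-y_v) \,:\, (\mathbf{x,y})\in P(F\setminus\{u\}),\ y_u = 0\Bigr\},
\]
and the ordering of the removals has to be engineered so that the easily-analysed coefficients come first and the delicate coefficients $\bar\beta_{a_i}$, $\bar\beta_{b_i}$ come last.

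Concretely, I would remove targets in four stages: (i) $T\setminus(H\cup\bigcup_i\mathcal T_i)$; (ii) $H\setminus\bigcup_i\mathcal T_i$; (iii) for each $i$, the targets of $\mathcal T_i\cap H$, with $a_i$ (when defined) removed last; (iv) for each $i$, the targets of $\mathcal T_i\setminus H$, with $b_i$ (when defined) removed last. For $u$ in stage (i), a tight MDTSP comb solution on $V\setminus\{u\}$ visits every other target and attains $\sigma(C)$, giving $\bar\beta_u = 0$. For stage (ii), omitting $u$ from a tight comb tour costs exactly one edge of $\gamma(H)$, so $\bar\beta_u = 1$. For stage (iii) with $u\neq a_i$, omitting $u$ costs one edge of $\gamma(H)$ and one of $\gamma(\mathcal T_i)$, so $\bar\beta_u = 2$; symmetrically $\bar\beta_u = 1$ in stage (iv) with $u\neq b_i$. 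In each case the witnessing solutions are obtained by small local modifications of a tight MDTSP comb solution and the matching upper bounds follow from the validity of the partially lifted inequality on all feasible GMDTSP solutions.

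The delicate coefficients are $\bar\beta_{a_i}$ and $\bar\beta_{b_i}$. They arise only when $\mu(\mathcal T_i\cap H) = 0$, respectively $\mu(\mathcal T_i\setminus H) = 0$; that is, precisely when no cluster is entirely contained in the relevant slab of the tooth, so a feasible GMDTSP solution is allowed to skip the whole slab. Thanks to the lifting order, when $a_i$ (respectively $b_i$) is lifted, all of its slab-mates have already been assigned coefficient $2$ (respectively $1$). Evaluating the partially lifted LHS on a solution that skips the entire slab, the $(1-y_v)$ contributions of the already-lifted siblings add back a block of value $2(|\mathcal T_i\cap H|-1)$ (respectively $|\mathcal T_i\setminus H|-1$), while the $\gamma$-terms lose exactly one more edge than they did in the generic case. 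Straightforward arithmetic then yields $\bar\beta_{a_i}=1$ and $\bar\beta_{b_i}=0$, producing~\eqref{eq:combb}.

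The main obstacle I expect is this last step. One must (a) exhibit the slab-skipping tight solutions, (b) verify by case analysis that no feasible $(\mathbf{x,y})$ with $y_{a_i}=0$ (respectively $y_{b_i}=0$) can push the partially lifted LHS any higher, and (c) confirm that the cluster-covering constraints~\eqref{eq:assignment} remain satisfiable after the slab is discarded. These constructions are the direct handle-and-tooth analogues of those used by Fischetti et al.\ for the GTSP; once they are in hand, iterating the lifting procedure yields~\eqref{eq:combb} as a facet of $P$.
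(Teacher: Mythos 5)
Your proposal is correct and follows essentially the same route as the paper, which omits the proof entirely and defers to the corresponding GTSP result of Fischetti et al.\ --- namely, sequential lifting of the MDTSP comb inequality via Proposition~\ref{prop:lifting} with a lifting order that handles the targets outside the comb first, then the handle-only and tooth targets, saving $a_i$ and $b_i$ for last so that their coefficients can be computed from slab-skipping tight solutions. Your sketch supplies more detail than the paper itself does, and the remaining work you identify (exhibiting the tight solutions and verifying the matching upper bounds) is exactly the content of the cited GTSP argument.
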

\begin{proof}
See \cite{Fischetti1995}. \qed
\end{proof}

\subsection{Additional valid inequalities specific to multiple depot problems \label{subsec:Tcomb}} In this section, we will examine a special type of comb inequality called the T-comb inequalities. The T-comb inequalities were introduced in \cite{Benavent2013} and proved to be valid and facet-defining for the MDTSP polytope. These inequalities are specific to problems involving multiple depots and hence, are important for the GMDTSP. A T-comb inequality $C$ is defined by an handle $H$ and teeth $\mathcal T_1, \dots, \mathcal T_t$ such that the following conditions are satisfied:
\begin{enumerate}[i.]
\item $H\cap \mathcal T_i \neq \emptyset \quad \forall i=1,\dots,t,$
\item $\mathcal T_i\setminus H \neq \emptyset \quad \forall i=1,\dots,t,$
\item $\mathcal T_i\cap \mathcal T_j = \emptyset \quad 1\leq i\leq j \leq t,$
\item $\mathcal T_i \cap D \neq \emptyset \quad \forall i=1,\dots,t,$
\item $H \subset T,$
\item $H\setminus \cup_{i=1}^t \mathcal T_i \neq \emptyset,$
\item $D\setminus \cup_{i=1}^t \mathcal T_i \neq \emptyset$.
\end{enumerate} 
The difference between the T-comb inequalities and the comb inequalities defined in Eq. \eqref{eq:comba} is that, the number of teeth are allowed to be even ($t\geq 1$) and each teeth must contain a depot. The comb size in this case is given by $\sigma(C)= |H| + \sum_{i=1}^t|\mathcal T_i| - (t+1)$. In this paper, we will only examine the T-comb inequalities with $|\mathcal T_i| = 2$ for every $i\in\{1,\dots,t\}$; the size of the comb in this case reduces to $\sigma(C) = |H| + t - 1$ and the corresponding T-comb inequality is given by 
\begin{flalign}
x(\gamma(H)) + \sum_{i=1}^t x(\gamma(\mathcal T_i)) &\leq |H| + t - 1, \label{eq:Tcomba}
\end{flalign}
The inequality in Eq. \eqref{eq:Tcomba} is valid and facet-defining for the MDTSP when $t\geq 2$. Again, we apply the lifting procedure in Proposition \ref{prop:lifting} to the inequality in \eqref{eq:Tcomba} and obtain facet-defining inequality for the GMDTSP. 
\begin{proposition} \label{prop:Tcomb}
Let $C = (H,\mathcal T_1,\dots,\mathcal T_t)$ be a T-comb with $|\mathcal T_i| = 2$ for every $i\in\{1,\dots,t\}$ and $t\geq 2$. Also suppose $|H\setminus \cup_i \mathcal T_i| > 1$ $($the proposition can be trivially extended to the case where $|H\setminus \cup_i \mathcal T_i| = 1)$. Let $\bar a$ be any target in $H\setminus \cup_i \mathcal T_i$. Then the following T-comb inequality is valid and facet-defining for the GMDTSP polytope $P$:
\begin{flalign}
x(\gamma(H)) + \sum_{i=1}^t x(\gamma(\mathcal T_i)) + \sum_{v\in T} \bar{\beta}_v (1-y_v) &\leq |H| + t - 1, \label{eq:Tcombb}
\end{flalign}
where $\bar \beta_v = 0$ for all $v\in T\setminus (H\cup \mathcal T_1 \cup \dots\cup \mathcal T_t)$, $\bar \beta_v = 1$ for all $v\in H\setminus (\mathcal T_1 \cup \dots\cup \mathcal T_t \cup \{\bar a\})$, $\bar \beta_{\bar a} = 0$, and $\bar \beta_v = 2$ for all $v\in \mathcal T_i \cap H,i=1,\dots,t$.
\end{proposition}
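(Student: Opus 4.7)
The plan is to apply Proposition \ref{prop:lifting} sequentially, starting from the MDTSP T-comb inequality \eqref{eq:Tcomba}, which is facet-defining for $Q = P(T)$. Because sequential lifting is order-dependent, the ordering of the targets is critical for recovering exactly the coefficients claimed. I propose the lifting sequence $\{v_1, \ldots, v_n\}$ that first lifts every target in $T \setminus (H \cup \mathcal{T}_1 \cup \cdots \cup \mathcal{T}_t)$, then the $t$ targets $a_1, \ldots, a_t$ (where $a_i$ denotes the unique element of $\mathcal{T}_i \cap H$), then the targets in $H \setminus (\mathcal{T}_1 \cup \cdots \cup \mathcal{T}_t \cup \{\bar{a}\})$, and finally $\bar{a}$ last.

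Following the style of Propositions \ref{prop:4path} and \ref{prop:path}, each lifting coefficient is verified by exhibiting a feasible GMDTSP solution that makes the partially lifted inequality tight. Targets outside $H \cup \bigcup_i \mathcal{T}_i$ do not appear in the LHS, so any tight MDTSP T-comb solution (with such a target skipped via its cluster) attains $\eta = |H|+t-1$, giving $\bar{\beta}_v = 0$. For each $a_i$, setting $y_{a_i} = 0$ removes one target from the handle and simultaneously kills the $i$-th tooth; the residual is still a valid T-comb (by the hypothesis $|H \setminus \bigcup_i \mathcal{T}_i| > 1$) whose bound is $|H| + t - 3$, yielding $\bar{\beta}_{a_i} = 2$. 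For each $v \in H \setminus (\bigcup_i \mathcal{T}_i \cup \{\bar{a}\})$, a case analysis shows that whether one keeps all previously-lifted $a_j$'s visited or deactivates any subset $S$ of them, the total (edge part plus accumulated $\bar{\beta}$-terms) is invariant at $|H| + t - 2$, because each deactivated $a_j$ contributes $+2$ to the accumulated side but removes $2$ from the T-comb bound (one unit from the handle size and one from the tooth count); this gives $\bar{\beta}_v = 1$.

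The decisive step is lifting $\bar{a}$ last. By this stage, the accumulated coefficients from the groups $\{a_1, \ldots, a_t\}$ and $H_2 := H \setminus (\bigcup_i \mathcal{T}_i \cup \{\bar{a}\})$ sum to $2t + |H_2| = 2t + (|H| - t - 1) = |H| + t - 1 = \eta$. Setting $y_v = 0$ for \emph{every} $v \in H$ then annihilates $x(\gamma(H))$ and every $x(\gamma(\mathcal{T}_i))$, while the accumulated $(1-y_v)$ terms contribute exactly $\eta$; this solution is tight and hence $\bar{\beta}_{\bar{a}} = 0$. The main obstacle in formalizing the argument is twofold: verifying feasibility of this extremal ``empty-$H$'' configuration under the cluster covering constraints (and, when some $C_h \subseteq H$ forces at least one visit, substituting a 2-path through a suitably chosen forced target in $H_3$ so that the LHS value still reaches $\eta$), and marshalling the explicit tight GMDTSP solutions (analogous to those in Fig.~\ref{fig:path}) that witness the maximum at every step of the lifting.
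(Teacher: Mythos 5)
Your overall strategy --- sequential lifting from the MDTSP T-comb inequality \eqref{eq:Tcomba} via Proposition \ref{prop:lifting}, certifying each coefficient with a tight feasible solution --- is exactly the paper's, but your lifting order is genuinely different: you lift the tooth-targets $a_i=\mathcal T_i\cap H$ \emph{before} $H\setminus(\cup_i\mathcal T_i\cup\{\bar a\})$ and $\bar a$, whereas the paper lifts them \emph{last} (its order is: $T\setminus(H\cup\bigcup_i\mathcal T_i)$, then $H\setminus(\bigcup_i\mathcal T_i\cup\{\bar a\})$, then $\bar a$, then the $a_i$). Since lifting is sequence-dependent this is not a cosmetic change, but your arithmetic checks out: in your order the coefficients still come out to $2$, $1$, $0$ respectively, matching the proposition, and your bookkeeping for the $H_2$ step (deactivating an $a_j$ gains $2$ on the accumulated side while shrinking the residual T-comb bound by $2$) is the right invariance argument. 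What the paper's order buys is a cleaner final stretch, which is precisely where your sketch is weakest.

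The soft spot is the lifting of $\bar a$. Your extremal witness sets $y_v=0$ for \emph{every} $v\in H$, which is feasible only if no cluster is contained in $H$ at all, and your fallback (``substituting a 2-path through a suitably chosen forced target in $H_3$'') is underspecified ($H_3$ is never defined) and does not work uniformly: if the forced target lies in $H\setminus(\cup_i\mathcal T_i\cup\{\bar a\})$ you lose its accumulated unit and cannot recover it, so that configuration only reaches $\eta-1$. The witness you actually want --- and the one the paper's Fig.~\ref{fig:Tcomb}(c) encodes, exploiting $x_e\in\{0,1,2\}$ on depot--target edges --- keeps every $a_i$ visited on a degenerate cycle $d_i\to a_i\to d_i$ with $x_{d_ia_i}=2$, contributing $2t$ on the edge side, and deactivates only $H\setminus\cup_i\mathcal T_i$, contributing $|H|-t-1$ on the accumulated side, for a total of $|H|+t-1=\eta$. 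This witness is available in your lifting order as well (the $a_i$ terms $2(1-y_{a_i})$ simply vanish because the $a_i$ are visited), it only requires that no cluster be trapped inside $H\setminus\cup_i\mathcal T_i$ --- the same implicit assumption the paper needs --- and it removes the need for any case analysis on forced clusters. Replace your ``empty-$H$'' configuration with this one and the proof closes.
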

\begin{proof}
\begin{figure}[htbp]
\centering
	\begin{minipage}[t]{.5\textwidth} \centering
		\includegraphics[scale=1]{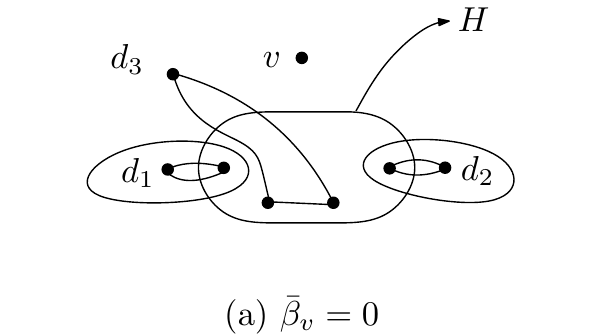}
	\end{minipage}\hfill
	\begin{minipage}[t]{.5\textwidth} \centering
		\includegraphics[scale=1]{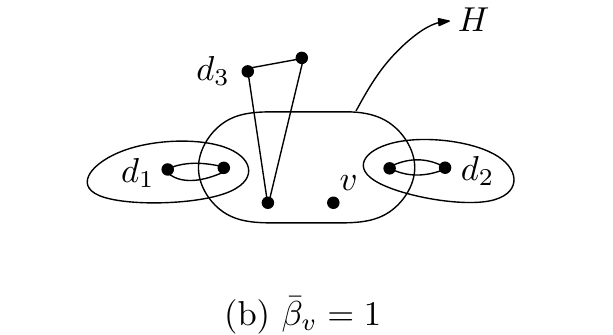}
	\end{minipage} \\
	\begin{minipage}[t]{.5\textwidth} \centering
		\includegraphics[scale=1]{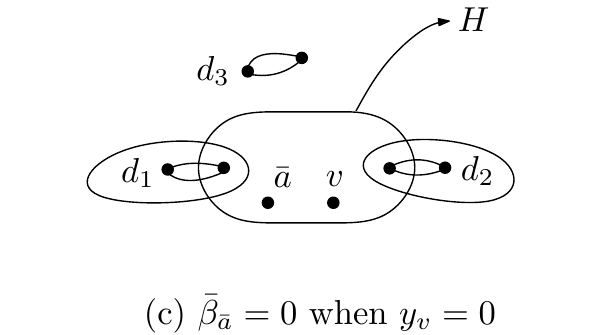}
	\end{minipage}\hfill
	\begin{minipage}[t]{.5\textwidth} \centering
		\includegraphics[scale=1]{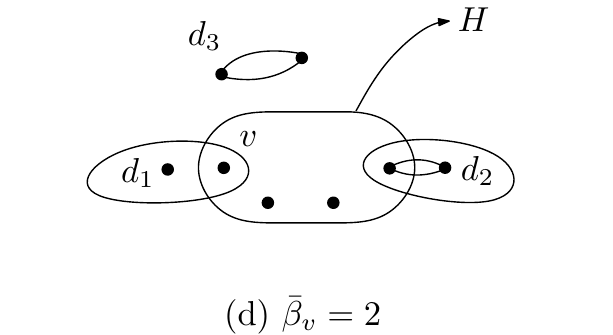}
	\end{minipage}
	\caption{Tight feasible solutions for proof of Proposition \ref{prop:Tcomb}.} 
\label{fig:Tcomb}
\end{figure}
Consider any lifting sequence for the set of targets $T$ in the following order: (i) targets in the set $T\setminus (H\cup \mathcal T_1 \cup \dots\cup \mathcal T_t)$, (ii) $v\in H\setminus (\mathcal T_1 \cup \dots\cup \mathcal T_t \cup \{\bar a\})$, (iii) $\bar a$, and (iv) $v\in \mathcal T_i \cap H, i=1,\dots,t$. The lifting coefficients $\bar\beta_v = 0$ and $\bar\beta_v = 1$ for the sets in (i) and (ii) respectively, are trivial to compute (tight feasible GMDTSP solutions are depicted in Fig. \ref{fig:Tcomb}(a) and \ref{fig:Tcomb}(b), respectively). Similarly, tight feasible GMDTSP solutions for the cases where $\bar\beta_{\bar a} = 0$ and $\bar\beta_v = 2$ (cases (iii) and (iv)) are shown in Fig. \ref{fig:Tcomb}(c) and \ref{fig:Tcomb}(d), respectively. \qed
\end{proof}
In the above proposition, for the case when $|H\setminus \cup_i \mathcal T_i| = 1$, the facet-defining inequality is given by 
\begin{flalign}
x(\gamma(H)) + \sum_{i=1}^t x(\gamma(\mathcal T_i)) &\leq \sum_{i=1}^t \sum_{v\in H\cap \mathcal T_i} 2y_v. \label{eq:Tcombc}
\end{flalign}

\section{Separation algorithms\label{sec:separation}}
In this section, we discuss the algorithms that are used to find violated families of all the valid inequalities introduced in Sec. \ref{sec:polyhedral}. We denote by $G^* = (V^*,E^*)$ the \emph{support graph} associated with a given fractional solution $(\mathbf{x}^*, \mathbf{y}^*) \in \mathbb{R}^{|E|\cup|T|}$ \emph{i.e.,} $G^*$ is a capacitated undirected graph with vertex set $V^* := \{i\in T: y^*_i >0\}\cup D$ and $E^*:=\{e\in E:x_e^* > 0\}$ with edge capacities $x^*_e$ for each edge $e\in E^*$.

\subsection{Separation of generalized sub-tour elimination constraints in Eq. \eqref{eq:sec1} and Eq. \eqref{eq:sec2}:}  \label{subsec:sec}
We first develop a separation algorithm for constraints in Eq. \eqref{eq:sec2}:  $x(\delta(S)) \geq 2y_i$ for $\mu(S) = 0, i\in S$ and $S\subseteq T$. Given a fractional solution $(\mathbf{x}^*, \mathbf{y}^*)$, the most violated constraint of the form \eqref{eq:sec2} can be obtained by computing a minimum capacity cut $(S,V^*\setminus S)$ with $i\in S$ and $D\subseteq V^*\setminus S$ on the graph $G^*$. The minimum capacity cut can be obtained by computing a maximum flow from $i$ to $t$, where $t$ is an additional vertex connected with each depot in the set $D$ through an edge having very large capacity. The algorithm is repeated for every target $i \in T \cap V^*$ and the target set $S$ obtained during each run of the algorithm defines a violated inequality if the capacity of the cut is strictly less than $2y^*_i$. This procedure can be implemented in $O(|T|^4)$ time.

Now we consider the constraint in Eq. \eqref{eq:sec1}: $x(\delta(S)) \geq 2$ for $\mu(S) \neq 0$ and $S\subseteq T$. Given a fractional solution $(\mathbf{x}^*, \mathbf{y}^*)$, the most violated inequality \eqref{eq:sec1} in this case is obtained by computing a minimum capacity cut $(S,V^*\setminus S)$ with a cluster $C_h \subseteq S$ and $D\subseteq V^*\setminus S$ on the graph $G^*$. This is in turn achieved by computing a maximum $s-t$ flow on $G^*$, where $s$ and $t$ are additional vertices connected with each $j\in C_h$ and each $d \in D$ respectively through an edge having very large capacity. The algorithm is repeated for every cluster $C_h$ and the set $S$ obtained on each run of the algorithm defines a violated inequality if the capacity of the cut is strictly less than $2$. The time complexity of this procedure is $O(m|T|^3)$, where $m$ is the number of clusters. 

We remark that the violated inequality of the form \eqref{eq:sec2} using the above algorithm, is not necessarily facet-defining as the set $S$ computed using the algorithm might have $\mu(S) \neq 0$. When this happens, we reject the inequality in favour of its dominating and facet-defining inequality in Eq. \eqref{eq:sec1}. 

\subsection{Separation of path elimination constraints in Eq. \eqref{eq:4path}, \eqref{eq:pec1}, and \eqref{eq:pec2}:} \label{subsec:pec}
We first discuss the procedure to separate violated constraints in Eq. \eqref{eq:4path}. Consider every pair of targets $j,k \in V^*\cap T$. We rewrite the constraint in \eqref{eq:4path} as $x(D':\{j\}) + x(\{k\}:D\setminus D') \leq 2(y_k + y_j)-3x_{jk}$. Given $j,k$ and a fractional solution $(\mathbf{x}^*, \mathbf{y}^*)$, the RHS of the above inequality is a constant and is equal to $2(y_k^* + y_j^*) - 3x_{jk}^*$. We observe that the LHS of the inequality is maximized when $D'=\{d\in D:x^*_{jd} \geq x^*_{kd}\}$. Furthermore, when $D'=\emptyset$ or $D' = D$, no path constraint in Eq. \eqref{eq:4path} is violated for the given pair of vertices. With $D' = \{d\in D: x^*_{jd} \geq x^*_{kd}\}$, if $x^*(D':\{j\}) + x^*(\{k\}:D\setminus D')$ is strictly greater than $2(y_k^* + y_j^*) - 3x_{jk}^*$, the path constraint in Eq. \eqref{eq:4path} is violated for the pair of vertices $j,k$ and the subset of depots $D'$. This procedure can be implemented in $O(|T|^2)$.

For constraints in Eq. \eqref{eq:pec1} and \eqref{eq:pec2}, we present two separation algorithms that are very similar to the algorithms presented in Sec. \ref{subsec:sec}. We will use the equivalent constraints in Eq. \eqref{eq:pec1a} and \eqref{eq:pec2a} to develop the algorithms. We first consider the path elimination constraint in Eq. \eqref{eq:pec2a}. Given $j,k$ and a fractional solution $(\mathbf{x}^*, \mathbf{y}^*)$, we first compute $D'$ to maximize $x^*(D':\{j\}) + x^*(\{k\}:D\setminus D') := \mathcal L$. Now, the most violated constraint of the form \eqref{eq:pec2a} can be obtained by computing a minimum capacity cut $(\bar S, V^*\setminus \bar S)$ with $j,k\in \bar S$, a cluster $C_h \subseteq \bar S\setminus\{j,k\}$ and $D\subseteq V^*\setminus \bar S$. This algorithm is repeated for every target $j,k\in T$ and cluster $C_h$ such that $j,k \notin C_h$ and the target set $S = \bar S\setminus \{j,k\}$ obtained during each run of the algorithm defines a violated inequality if the capacity of the cut is strictly less than $\mathcal L + 1$. The time complexity of this algorithm is $O(m|T|^4)$. Similarly, the most violated constraint of the form \eqref{eq:pec1a} can be obtained by computing a minimum capacity cut $(\bar S, V^*\setminus \bar S)$, with $i,j,k \in \bar S$ and $D\subseteq V^*\setminus \bar S$ on the graph $G^*$. This algorithm is repeated for very triplet of targets in $V^*$ and the set $S = \bar S\setminus \{j,k\}$ defines a violated inequality if the capacity of the cut is strictly less than $\mathcal L+y_i^*$. The time complexity of the algorithm is $O(|T|^5)$. 

Similar to the separation of the sub-tour elimination constraints, we remark that the violated inequality of the form \eqref{eq:pec1a}, computed using the above algorithm  is not necessarily facet-defining as the set $S$ might have $\mu(S) \neq 0$. When this happens, we reject the inequality in favour of its dominating and facet-defining inequality in Eq. \eqref{eq:pec2a}.

\subsection{Separation of comb inequalities in Eq. \eqref{eq:combb}} \label{subsec:comb} 
For the comb-inequalities in Eq. \eqref{eq:combb}, we use the separation procedures discussed in \cite{Fischetti1997}. We first consider the special case of the comb inequalities with $|\mathcal T_i|=2$ for $i=1,\dots,t$ \emph{i.e.,} the 2-matching inequalities. Using a construction similar to the one proposed in \cite{Padberg1982} for the $b$-matching problem, the separation problem for the 2-matching inequalities can be transformed into a minimum capacity off cut problem; hence this separation problem is exactly solvable in polynomial time. But this procedure is computationally intensive, and so we use the following heuristic proposed by \cite{Grotschel1985}. Given a fractional solution $(\mathbf{x}^*, \mathbf{y}^*)$, the heuristic considers a graph $\bar G = (\bar V, \bar E)$ where $\bar V = V^* \cap T$ and $\bar E = \{e: 0<x_e^*<1\}$. Then, we consider each connected component $H$ of $\bar G$ as a handle of a possibly violated 2-matching inequality whose two-vertex teeth correspond to edges $e \in \delta(H)$ with $x_e^* = 1$. We reject the inequality if the number of teeth is even. The time complexity of this algorithm is $O(|\bar V| + |\bar E|)$. As for the comb inequalities, we apply the same procedure after shrinking each cluster into a single supernode.  

\subsection{Separation of T-comb inequalities in Eq. \eqref{eq:Tcombb} and \eqref{eq:Tcombc}} \label{subsec:Tcomb}
We present a separation heuristic similar to the one used in \cite{Benavent2013} to identify violated T-comb inequalities of the form Eq. \eqref{eq:Tcombb} and \eqref{eq:Tcombc}. We first build a set of teeth, each containing a distinct depot according to the following procedure: a tooth $\mathcal T_i$ is built by starting with a set containing a depot $d\in D$; a target $v\in T$ is added to $\mathcal T_i$ such that $x(\delta(\mathcal T_i))$ is a minimum. Then, for every subset of this set of teeth such that: (i) they are pairwise disjoint, (ii) belong to the same connected component of the support graph $G^* = (V^*,E^*)$, and (iii) do not together contain all the targets of that connected component, an appropriate handle $H$ is built as follows: assume $H$ is the set of all the targets in the connected component and remove the targets in $H\setminus (\mathcal T_i\cup \dots \cup \mathcal T_t)$ sequentially. Every time a target is removed, the T-comb inequality of the appropriate form is checked for violation. The time complexity of this algorithm is $O(|T|)$.

\section{Branch-and-cut algorithm \label{sec:bandc}}
In this section, we describe important implementation details of the branch-and-cut algorithm for the GMDTSP. The algorithm is implemented within a CPLEX 12.4 framework using the CPLEX callback functions \cite{cplex124}. The callback functions in CPLEX enable the user to completely customize the branch-and-cut algorithm embedded into CPLEX, including the choice of node to explore in the enumeration tree, the choice of branching variable, the separation and the addition of user-defined cutting planes and the application of heuristic methods.

The lower bound at the root node of the enumeration tree is computed by solving the LP relaxation of the formulation in Sec. \ref{sec:Formulation} that is further strengthened using the cutting planes described in Sec. \ref{sec:polyhedral}. The initial linear program consisted of all constraints in \eqref{eq:obj}-\eqref{eq:yinteger}, except \eqref{eq:sec}, \eqref{eq:4path} and \eqref{eq:path}. For a given LP solution, we identify violated inequalities using the separation procedures detailed in Sec. \ref{sec:separation} in the following order: (i) sub-tour elimination constraints in Eq. \eqref{eq:sec1}, (ii) sub-tour elimination constraints in Eq. \eqref{eq:sec2} (iii) path elimination constraints in Eq. \eqref{eq:4path}, \eqref{eq:pec1} and, \eqref{eq:pec2}, (iv) generalized comb constraints in Eq. \eqref{eq:combb}, and (v) T-comb constraints in Eq. \eqref{eq:Tcombb} and \eqref{eq:Tcombc}. This order of adding the constraints to the formulation was chosen after performing extensive computational experiments. Furthermore, we disabled the separation of all the cuts embedded into the CPLEX framework because enabling these cuts increased the average computation time for the instances. Once the new cuts generated using these separation procedures were added to the linear program, the tighter linear program was resolved. This procedure was iterated until either of the following conditions was satisfied: (i) no violated constraints could be generated by the separation procedures, (ii) the current lower bound of the enumeration tree was greater or equal to the current upper bound. If no constraints are generated in the separation phase, we create subproblems by branching on a fractional variable. First, we select a fractional $y_{i}$ variable, based on the \emph{strong branching} rule (\cite{Achterberg2005}). If all these variables are integers, then we select a fractional $x_{e}$ variable using the same rule. As for the node-selection rule, we used the best-first policy for all our computations,\emph{i.e.}, select the subproblem with the lowest objective value.

\subsection{Preprocessing \label{subsec:preprocessing}}
In this section, we detail a preprocessing algorithm that enables the reduction of size of the GMDTSP instances whose edge costs satisfy the triangle inequality \emph{i.e.,} for distinct $i,j,k \in T$, $c_{ij} + c_{jk} \geq c_{ik}$. A similar algorithm is presented in \cite{Laporte1987, Bektas2011} for the asymmetric generalized traveling salesman problem and generalized vehicle routing problem respectively. In a GMDTSP instance where the edge costs satisfy the triangle inequality, the optimal solution would visit exactly one target in each cluster. We utilize this structure of the optimal solution and reduce the size of a given GMDTSP instance, if possible. To that end, we define a target $i\in T$ to be \emph{dominated} if there exits a target $j \in C_{h(i)}$, $j\neq i$ such that
\begin{enumerate}
\item $c_{pi} + c_{iq} \geq c_{pj} + c_{jq}$ for any $p,q \in T \setminus C_{h(i)}$,
\item $c_{di} \geq c_{dj}$ for all $d \in D$, and
\item $c_{di} + c_{ip} \geq c_{dj} + c_{jp}$ for any $d \in D, p\in T\setminus C_{h(i)}$.
\end{enumerate}
\begin{proposition} \label{prop:pp}
If a dominated target is removed from a GMDTSP instance satisfying triangle inequality, then the optimal cost to the instance does not change. 
\end{proposition}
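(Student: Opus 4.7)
The plan is to compare the optimal cost $\mathrm{opt}(I)$ of the original instance $I$ with the optimal cost $\mathrm{opt}(I')$ of the reduced instance $I'$ obtained by deleting the dominated target $i$, and show equality by two inequalities.

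For the direction $\mathrm{opt}(I) \leq \mathrm{opt}(I')$, I would simply observe that any feasible solution of $I'$ is automatically feasible for $I$: the cluster $C_{h(i)}$ still contains the dominating target $j$, so the covering constraints are met without visiting $i$. The edge costs are unchanged, so the objective value of any feasible solution of $I'$ is preserved when viewed as a solution of $I$, giving $\mathrm{opt}(I) \leq \mathrm{opt}(I')$.

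The interesting direction is $\mathrm{opt}(I') \leq \mathrm{opt}(I)$. I would fix an optimal solution $\mathcal{F}^*$ of $I$ and construct from it a feasible solution of $I'$ of cost at most $c(\mathcal{F}^*)$. If $\mathcal{F}^*$ does not visit $i$, it is already feasible for $I'$. Otherwise, let $\mathcal{C}$ be the unique cycle in $\mathcal{F}^*$ that visits $i$. I split into subcases depending on the two neighbours of $i$ in $\mathcal{C}$ and on whether $j$ is already visited by $\mathcal{F}^*$. If $j$ is visited somewhere in $\mathcal{F}^*$, then the cluster $C_{h(i)}$ remains covered even after deleting $i$, and I can shortcut $i$: in the generic case the neighbours are targets $p,q \in T\setminus C_{h(i)}$ (by the earlier argument that under the triangle inequality we may assume exactly one target per cluster is visited) and the splice $p\!-\!q$ uses $c_{pq}\leq c_{pi}+c_{iq}$; the case where one or both neighbours is a depot is handled the same way via the triangle inequality, and the degenerate cycle $d\!-\!i\!-\!d$ is simply deleted at cost reduction $2c_{di}\geq 0$. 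If $j$ is not visited, I replace $i$ by $j$ in $\mathcal{C}$: if the neighbours are targets $p,q\in T\setminus C_{h(i)}$, condition~(1) gives $c_{pj}+c_{jq}\leq c_{pi}+c_{iq}$; if one neighbour is a depot $d$ and the other is a target $p$, condition~(3) gives $c_{dj}+c_{jp}\leq c_{di}+c_{ip}$; and in the degenerate $d\!-\!i\!-\!d$ case condition~(2) gives $2c_{dj}\leq 2c_{di}$. In every subcase the resulting tour set is feasible for $I'$ and has cost no greater than $c(\mathcal{F}^*)=\mathrm{opt}(I)$.

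The main obstacle is really bookkeeping: one has to enumerate the three neighbour configurations of $i$ (two targets, target-and-depot, depot-and-depot) and cross them with whether $j$ is visited, and verify that the three conditions defining \emph{dominated} were stated precisely so as to cover exactly the three replacement situations. The justification that we may assume each cluster is visited exactly once in $\mathcal{F}^*$ deserves a one-line appeal to the triangle inequality (shortcutting any extra visits). Once these routine reductions are in place, combining the two inequalities yields $\mathrm{opt}(I')=\mathrm{opt}(I)$, completing the proof.
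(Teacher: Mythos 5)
Your proposal is correct and follows essentially the same route as the paper: the paper's proof also notes that the triangle inequality lets one assume exactly one target per cluster is visited, and then exchanges the dominated target $i$ for its dominating target $j$, compressing your neighbour-configuration case analysis into the single remark that the exchange ``follows from the definition of a dominated target.'' Your write-up simply makes explicit the bookkeeping (the two-target, target--depot, and depot--depot cases matching conditions (1)--(3)) and the trivial reverse inequality that the paper leaves implicit.
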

\begin{proof}
Let $i\in T$ be a dominated vertex. If the target $i$ is not visited in the optimal solution, then its removal does not change the optimal cost. So, assume that $i\in T$ is visited by the optimal solution. Since the edge costs of the instance satisfy the triangle inequality, exactly one target in each cluster is visited by the optimal solution. We now claim that it is possible to exchange the target $i$ with a target $j \in C_{h(i)}$ without increasing the cost of the optimal solution. This follows from the definition of a dominated target. \qed
\end{proof}
The preprocessing checks if a target is dominated and removes the target if it is found so. Then the other targets are checked for dominance relative to the reduced instance. The time complexity of the algorithm is $O(|T|^5)$. 

\subsection{LP rounding heuristic \label{subsec:heuristic}}
We discuss an \emph{LP-rounding} heuristic that aides to generate feasible solutions at the root node and to speed up the convergence of the branch-and-cut algorithm. The heuristic constructs a feasible GMDTSP solution from a given fractional LP solution. It is used only at the root node of the enumeration tree. The heuristic is based on a transformation method in \cite{Oberlin2009}. We are given $\mathbf{y}^*$, the vector of fractional $y_i$ values (denoted by $y_i^f$) for each target $i$. The algorithm proceeds as follows: for each cluster $C_k$ and every target $i\in C_k$, the heuristic sets the value of $y_i$ to $0$ or $1$ according to the condition $y_i^f \geq 0.5$ or $y_i^f < 0.5$ respectively. If every target $i \in C_k$ has $y_i^f < 0.5$, then we set the value of $y_j =1$ where $j = \operatorname{argmax}\{y_i^f:i\in C_k\}$. Once we have assigned the $y_i$ value for each target $i$, we define the set $\Pi := \{i\in T: y_i = 1\}$. We then solve a multiple depot traveling salesman problem (MDTSP) on the set of vertices $\Pi \cup D$. A heuristic based on the transformation method in \cite{Oberlin2009} and LKH heuristic (see \cite{Helsgaun2000}) is used to solve the MDTSP.

\section{Computational results \label{sec:results}}
In this section, we discuss the computational results of the branch-and-cut algorithm. The algorithm was implemented in C++ (gcc version 4.6.3), using the elements of Standard Template Library (STL) in the CPLEX 12.4 framework. As mentioned in Sec. \ref{sec:bandc}, the internal CPLEX cut generation was disabled, and CPLEX was used only to mange the enumeration tree. All the simulations were performed on a Dell Precision T5500 workstation (Inter Xeon E5360 processor @2.53 GHz, 12 GB RAM). The computation times reported are expressed in seconds, and we imposed a time limit of 7200 seconds for each run of the algorithm. The performance of the algorithm was tested on a total of 116 instances, all of which were generated using the generalized traveling salesman problem library (see \cite{Fischetti1997, Gutin2010}). 

\subsection{Problem instances \label{subsec:instance}}
All the computational experiments were conducted on a class of 116 test instances generated from 29 GTSP instances. The GTSP instances are taken directly from the GTSP Instances Library (see \cite{Gutin2010}). The instances are available at \url{http://www.cs.nott.ac.uk/~dxk/gtsp.html}.  For each of the 29 instances, GMDTSP instances with $|D| \in \{2,3,4,5\}$ were generated by assuming the first $|D|$ targets in a GTSP instance to be the set of depots; these depots were then removed from the target clusters. The number of targets in the instances varied from 14 to 105, and the maximum number of target clusters was 21. Hence we had 4 GMDTSP instances for each of the 29 GTSP instances totalling to 116 test instances. We also note that for 64/116 instances, the edge costs do not satisfy the triangle inequality and for the remaining 52 instances, the edge costs satisfy the triangle inequality. The name of the generated instances are the same but for a small modification to spell out the number of depots in the instances. The naming conforms to the format \texttt{GTSPinstancename-D}, where \texttt{GTSPinstancename} corresponds to the GTSP instance name from the library (the first and the last integer in the name corresponds to the number of clusters and the number of targets in the GTSP instance respectively) and \texttt{D} corresponds the number of depots in the instance.

The results are tabulated in tables \ref{tab:results} and \ref{tab:times}. The following nomenclature is used in the table \ref{tab:results}

\noindent \textbf{name}: problem instance name (format: \texttt{GTSPinstancename-D});\\
\textbf{opt}: optimal objective value;\\
\textbf{LB}: objective value of the LP relaxation computed at the root node of the enumeration tree;\\
\textbf{\%LB}: percentage LB/opt;\\
\textbf{UB}: cost of the best feasible solution generated by the LP-rounding heuristic generated at the root node of the enumeration tree;\\
\textbf{\%UB}: percentage UB/opt;\\
\textbf{sec1}: total number of constraints \eqref{eq:sec1} generated;\\
\textbf{sec2}: total number of constraints \eqref{eq:sec2} generated;\\
\textbf{4pec}: total number of constraints \eqref{eq:4path} generated;\\
\textbf{pec}: total number of constraints \eqref{eq:pec1} and \eqref{eq:pec2} generated;\\
\textbf{comb}: total number of constraints \eqref{eq:combb}, \eqref{eq:Tcombb}, and \eqref{eq:Tcombc} generated;\\
\textbf{nodes}: total number of nodes examined in the enumeration tree.\\

The table \ref{tab:times} gives the computational time for each separation routine and the overall the branch-and-cut algorithm. The nomenclature used in table \ref{tab:times} are as follows:

\noindent \textbf{name}: problem instance name (format: \texttt{GTSPinstancename-D});\\
\textbf{total-t}: CPU time, in seconds, for the overall execution of the branch-and-cut algorithm;\\
\textbf{sep-t}: overall CPU time, in seconds, spent for separation;\\
\textbf{sec-t}: CPU time, in seconds, spent for the separation of constraints \eqref{eq:sec1} and \eqref{eq:sec2};\\
\textbf{4pec-t}: CPU time, in seconds, spent for the separation of constraints \eqref{eq:4path};\\
\textbf{pec-t}: CPU time, in seconds, spent for the separation of constraints \eqref{eq:pec1} and \eqref{eq:pec2};\\
\textbf{comb-t}: CPU time, in seconds, spent for the separation of constraints \eqref{eq:combb}, \eqref{eq:Tcombb}, and \eqref{eq:Tcombc};\\
\textbf{\%pec}: percentage of separation time spent for the separation of path elimination constraints \eqref{eq:pec1} and \eqref{eq:pec2}. \\

{\scriptsize
\begin{longtable}{lrrrrrR{0.7cm}R{0.7cm}R{0.7cm}R{0.8cm}R{0.7cm}r}
\caption{Branch-and-cut statistics} \label{tab:results} \\

\toprule
name & opt & LB & \%LB & UB & \%UB & sec1 & sec2 & 4pec & pec & comb & nodes\tabularnewline \midrule
\endfirsthead

\multicolumn{12}{l}
{{\tablename\ \thetable{} -- continued from previous page}} \\
\toprule
name & opt & LB & \%LB & UB & \%UB & sec1 & sec2 & 4pec & pec & comb & nodes\tabularnewline \midrule
\endhead

\hline
\endfoot

\hline \noalign{\vskip1mm} 
\multicolumn{12}{l}{$^\dagger$optimality was not verified within a time-limit of 7200 seconds.}
\endlastfoot

3burma14-2 & 1939 & 1939.00 & 100.00 & 1939 & 100.00 & 51 & 8 & 0 & 2 & 0 & 0\tabularnewline
3burma14-3 & 1664 & 1664.00 & 100.00 & 1664 & 100.00 & 11 & 15 & 0 & 2 & 0 & 0\tabularnewline
3burma14-4 & 1296 & 1296.00 & 100.00 & 1296 & 100.00 & 8 & 14 & 0 & 0 & 0 & 0\tabularnewline
3burma14-5 & 562 & 562.00 & 100.00 & 562 & 100.00 & 1 & 20 & 0 & 0 & 0 & 0\tabularnewline
4br17-2 & 31 & 31.00 & 100.00 & 54 & 174.19 & 7 & 4 & 0 & 0 & 1 & 3\tabularnewline
4br17-3 & 31 & 31.00 & 100.00 & 31 & 100.00 & 7 & 7 & 0 & 0 & 0 & 0\tabularnewline
4br17-4 & 19 & 19.00 & 100.00 & 19 & 100.00 & 5 & 14 & 0 & 0 & 0 & 0\tabularnewline
4br17-5 & 19 & 19.00 & 100.00 & 19 & 100.00 & 5 & 20 & 0 & 4 & 0 & 0\tabularnewline
4gr17-2 & 958 & 846.33 & 88.34 & 965 & 100.73 & 22 & 187 & 8 & 335 & 0 & 97\tabularnewline
4gr17-3 & 738 & 722.88 & 97.95 & 794 & 107.59 & 3 & 43 & 1 & 53 & 4 & 6\tabularnewline
4gr17-4 & 611 & 611.00 & 100.00 & 611 & 100.00 & 2 & 14 & 0 & 3 & 0 & 0\tabularnewline
4gr17-5 & 513 & 513.00 & 100.00 & 513 & 100.00 & 1 & 25 & 0 & 0 & 0 & 0\tabularnewline
4ulysses16-2 & 4695 & 4695.00 & 100.00 & 4695 & 100.00 & 36 & 18 & 0 & 0 & 0 & 0\tabularnewline
4ulysses16-3 & 4695 & 4695.00 & 100.00 & 4695 & 100.00 & 53 & 20 & 0 & 0 & 0 & 0\tabularnewline
4ulysses16-4 & 4695 & 4695.00 & 100.00 & 4695 & 100.00 & 50 & 27 & 0 & 0 & 0 & 0\tabularnewline
4ulysses16-5 & 3914 & 3884.00 & 99.23 & 4188 & 107.00 & 22 & 27 & 0 & 7 & 0 & 3\tabularnewline
5gr21-2 & 1679 & 1531.67 & 91.22 & 1985 & 118.23 & 419 & 367 & 12 & 2158 & 0 & 449\tabularnewline
5gr21-3 & 1024 & 1024.00 & 100.00 & 1024 & 100.00 & 6 & 32 & 0 & 2 & 0 & 0\tabularnewline
5gr21-4 & 953 & 953.00 & 100.00 & 953 & 100.00 & 9 & 20 & 0 & 1 & 0 & 0\tabularnewline
5gr21-5 & 780 & 780.00 & 100.00 & 780 & 100.00 & 4 & 9 & 0 & 2 & 0 & 0\tabularnewline
5gr24-2 & 377 & 340.53 & 90.33 & 828 & 219.63 & 25 & 169 & 0 & 366 & 0 & 13\tabularnewline
5gr24-3 & 377 & 318.00 & 84.35 & 569 & 150.93 & 37 & 181 & 0 & 524 & 32 & 42\tabularnewline
5gr24-4 & 371 & 325.17 & 87.65 & 753 & 202.96 & 39 & 157 & 8 & 303 & 6 & 26\tabularnewline
5gr24-5 & 362 & 308.17 & 85.13 & 739 & 204.14 & 12 & 99 & 7 & 222 & 0 & 87\tabularnewline
5ulysses22-2 & 5199 & 5199.00 & 100.00 & 5199 & 100.00 & 70 & 71 & 2 & 126 & 1 & 0\tabularnewline
5ulysses22-3 & 5311 & 5310.50 & 99.99 & 5442 & 102.47 & 45 & 82 & 0 & 1 & 0 & 3\tabularnewline
5ulysses22-4 & 5021 & 5021.00 & 100.00 & 5021 & 100.00 & 45 & 39 & 0 & 0 & 0 & 0\tabularnewline
5ulysses22-5 & 3913 & 3913.00 & 100.00 & 3913 & 100.00 & 37 & 27 & 0 & 1 & 0 & 0\tabularnewline
6bayg29-2 & 711 & 624.50 & 87.83 & 905 & 127.29 & 82 & 312 & 0 & 1526 & 0 & 148\tabularnewline
6bayg29-3 & 684 & 582.50 & 85.16 & 841 & 122.95 & 70 & 809 & 3 & 3489 & 28 & 301\tabularnewline
6bayg29-4 & 583 & 527.50 & 90.48 & 811 & 139.11 & 25 & 91 & 0 & 171 & 7 & 24\tabularnewline
6bayg29-5 & 565 & 520.79 & 92.17 & 1888 & 334.16 & 40 & 103 & 0 & 360 & 6 & 21\tabularnewline
6bays29-2 & 849 & 761.46 & 89.69 & 1194 & 140.64 & 123 & 178 & 0 & 1466 & 0 & 296\tabularnewline
6bays29-3 & 830 & 777.68 & 93.70 & 1092 & 131.57 & 80 & 145 & 1 & 959 & 17 & 48\tabularnewline
6bays29-4 & 691 & 650.60 & 94.15 & 847 & 122.58 & 30 & 92 & 3 & 238 & 20 & 6\tabularnewline
6bays29-5 & 622 & 591.55 & 95.10 & 1052 & 169.13 & 30 & 99 & 1 & 258 & 3 & 10\tabularnewline
6fri26-2 & 480 & 471.50 & 98.23 & 541 & 112.71 & 54 & 184 & 1 & 519 & 0 & 15\tabularnewline
6fri26-3 & 486 & 466.00 & 95.88 & 510 & 104.94 & 167 & 166 & 0 & 1923 & 3 & 388\tabularnewline
6fri26-4 & 440 & 414.57 & 94.22 & 446 & 101.36 & 92 & 128 & 0 & 355 & 9 & 38\tabularnewline
6fri26-5 & 436 & 411.56 & 94.39 & 473 & 108.49 & 66 & 91 & 2 & 520 & 2 & 41\tabularnewline
9dantzig42-2 & 413 & 413.00 & 100.00 & 413 & 100.00 & 114 & 300 & 0 & 0 & 0 & 0\tabularnewline
9dantzig42-3 & 351 & 351.00 & 100.00 & 358 & 101.99 & 82 & 328 & 0 & 10 & 1 & 3\tabularnewline
9dantzig42-4 & 350 & 345.75 & 98.79 & 396 & 113.14 & 81 & 272 & 1 & 442 & 33 & 6\tabularnewline
9dantzig42-5 & 348 & 344.29 & 98.93 & 348 & 100.00 & 82 & 203 & 2 & 346 & 45 & 12\tabularnewline
10att48-2 & 4924 & 4284.05 & 87.00 & 5510 & 111.90 & 456 & 945 & 0 & 7563 & 0 & 268\tabularnewline
10att48-3 & 4913 & 4539.33 & 92.39 & 6054 & 123.22 & 177 & 880 & 8 & 10115 & 154 & 1406\tabularnewline
10att48-4 & 4428 & 3980.11 & 89.89 & 5685 & 128.39 & 197 & 738 & 2 & 8555 & 138 & 879\tabularnewline
10att48-5 & 4204 & 3897.97 & 92.72 & 5515 & 131.18 & 87 & 690 & 9 & 12826 & 1077 & 594\tabularnewline
10gr48-2 & 1708 & 1707.00 & 99.94 & 1708 & 100.00 & 88 & 186 & 1 & 259 & 0 & 2\tabularnewline
10gr48-3 & 1638 & 1628.14 & 99.40 & 2345 & 143.16 & 74 & 220 & 4 & 1011 & 0 & 14\tabularnewline
10gr48-4 & 1645 & 1629.23 & 99.04 & 2197 & 133.56 & 86 & 185 & 0 & 958 & 1 & 33\tabularnewline
10gr48-5 & 1638 & 1471.48 & 89.83 & 2243 & 136.94 & 108 & 405 & 5 & 2163 & 30 & 179\tabularnewline
10hk48-2 & 6401 & 6209.83 & 97.01 & 6753 & 105.50 & 357 & 418 & 7 & 3018 & 0 & 82\tabularnewline
10hk48-3 & 5872 & 5567.49 & 94.81 & 6211 & 105.77 & 234 & 364 & 1 & 2549 & 0 & 75\tabularnewline
10hk48-4 & 5642 & 5044.00 & 89.40 & 6359 & 112.71 & 269 & 474 & 1 & 2370 & 3 & 69\tabularnewline
10hk48-5 & 5641 & 5145.17 & 91.21 & 6702 & 118.81 & 282 & 399 & 0 & 3455 & 14 & 27\tabularnewline
11berlin52-2 & 3500 & 3425.00 & 97.86 & 4010 & 114.57 & 121 & 288 & 0 & 1 & 1 & 17\tabularnewline
11berlin52-3 & 3500 & 3376.17 & 96.46 & 3963 & 113.23 & 142 & 311 & 1 & 753 & 66 & 20\tabularnewline
11berlin52-4 & 3500 & 3280.00 & 93.71 & 3699 & 105.69 & 88 & 241 & 1 & 426 & 3 & 25\tabularnewline
11berlin52-5 & 3500 & 3273.92 & 93.54 & 4169 & 119.11 & 131 & 160 & 0 & 599 & 26 & 26\tabularnewline
11eil51-2 & 175 & 174.50 & 99.71 & 175 & 100.00 & 148 & 522 & 2 & 1071 & 0 & 3\tabularnewline
11eil51-3 & 174 & 168.83 & 97.03 & 174 & 100.00 & 138 & 269 & 3 & 1160 & 54 & 11\tabularnewline
11eil51-4 & 175 & 165.24 & 94.42 & 183 & 104.57 & 175 & 273 & 11 & 1837 & 18 & 74\tabularnewline
11eil51-5 & 170 & 166.44 & 97.91 & 170 & 100.00 & 71 & 214 & 2 & 479 & 6 & 8\tabularnewline
12brazil58-2 & 14939 & 14939.00 & 100.00 & 14939 & 100.00 & 141 & 278 & 3 & 834 & 0 & 0\tabularnewline
12brazil58-3 & 14930 & 14840.50 & 99.40 & 15240 & 102.08 & 140 & 298 & 1 & 967 & 57 & 18\tabularnewline
12brazil58-4 & 13082 & 12680.46 & 96.93 & 16148 & 123.44 & 147 & 397 & 1 & 1447 & 126 & 40\tabularnewline
12brazil58-5 & 12613 & 11958.93 & 94.81 & 15546 & 123.25 & 153 & 1049 & 1 & 583 & 50 & 98\tabularnewline
14st70-2 & 304 & 288.01 & 94.74 & 307 & 100.99 & 392 & 576 & 2 & 3147 & 3 & 81\tabularnewline
14st70-3 & 301 & 292.57 & 97.20 & 312 & 103.65 & 313 & 600 & 6 & 2846 & 12 & 17\tabularnewline
14st70-4 & 298 & 287.25 & 96.39 & 298 & 100.00 & 182 & 372 & 4 & 1404 & 4 & 19\tabularnewline
14st70-5 & 298 & 282.28 & 94.73 & 325 & 109.06 & 313 & 670 & 9 & 3883 & 5 & 163\tabularnewline
16eil76-2 & 198 & 198.00 & 100.00 & 198 & 100.00 & 223 & 436 & 0 & 945 & 0 & 0\tabularnewline
16eil76-3 & 197 & 197.00 & 100.00 & 197 & 100.00 & 174 & 258 & 3 & 727 & 6 & 0\tabularnewline
16eil76-4 & 197 & 197.00 & 100.00 & 197 & 100.00 & 147 & 360 & 4 & 941 & 20 & 0\tabularnewline
16eil76-5 & 188 & 180.42 & 95.97 & 196 & 104.26 & 233 & 386 & 5 & 1132 & 25 & 27\tabularnewline
20gr96-2$^\dagger$ & 29966 & 28357.03 & 94.63 & 30821 & 102.85 & 823 & 1220 & 1 & 3540 & 0 & 62\tabularnewline
20gr96-3$^\dagger$ & 29621 & 29263.93 & 98.79 & 30768 & 103.87 & 876 & 1326 & 2 & 3382 & 529 & 50\tabularnewline
20gr96-4 & 28705 & 27650.63 & 96.33 & 30121 & 104.93 & 866 & 1754 & 6 & 4268 & 7 & 144\tabularnewline
20gr96-5 & 28598 & 27768.50 & 97.10 & 29976 & 104.82 & 676 & 1269 & 1 & 2087 & 1 & 52\tabularnewline
20kroA100-2 & 9630 & 9265.75 & 96.22 & 9769 & 101.44 & 746 & 1080 & 5 & 3481 & 0 & 66\tabularnewline
20kroA100-3 & 9334 & 8935.25 & 95.73 & 9535 & 102.15 & 532 & 915 & 0 & 2801 & 0 & 92\tabularnewline
20kroA100-4 & 8897 & 8539.03 & 95.98 & 10243 & 115.13 & 935 & 1241 & 2 & 4490 & 0 & 126\tabularnewline
20kroA100-5 & 8827 & 8477.39 & 96.04 & 9020 & 102.19 & 520 & 1028 & 4 & 2480 & 0 & 47\tabularnewline
20kroB100-2 & 9800 & 9492.00 & 96.86 & 10382 & 105.94 & 510 & 955 & 4 & 3025 & 0 & 30\tabularnewline
20kroB100-3$^\dagger$ & 10218 & 9197.41 & 90.01 & 10300 & 100.80 & 903 & 1120 & 1 & 5373 & 0 & 130\tabularnewline
20kroB100-4 & 9564 & 9293.31 & 97.17 & 9637 & 100.76 & 361 & 714 & 0 & 2323 & 0 & 20\tabularnewline
20kroB100-5 & 9226 & 8525.71 & 92.41 & 11708 & 126.90 & 739 & 1058 & 10 & 7225 & 0 & 119\tabularnewline
20kroC100-2$^\dagger$ & 10089 & 9548.13 & 94.64 & 10089 & 100.00 & 420 & 974 & 0 & 1551 & 0 & 3\tabularnewline
20kroC100-3 & 9244 & 9130.82 & 98.78 & 9346 & 101.10 & 494 & 1006 & 0 & 1940 & 1 & 8\tabularnewline
20kroC100-4 & 9292 & 9061.20 & 97.52 & 9342 & 100.54 & 307 & 707 & 2 & 1132 & 3 & 10\tabularnewline
20kroC100-5 & 9252 & 8991.89 & 97.19 & 10437 & 112.81 & 380 & 956 & 3 & 2181 & 0 & 19\tabularnewline
20kroD100-2$^\dagger$ & 9353 & 8497.63 & 90.85 & 9381 & 100.30 & 886 & 1525 & 4 & 3221 & 6 & 65\tabularnewline
20kroD100-3 & 8813 & 8130.12 & 92.25 & 11404 & 129.40 & 1284 & 1664 & 5 & 11642 & 24 & 212\tabularnewline
20kroD100-4 & 8772 & 8283.74 & 94.43 & 8823 & 100.58 & 577 & 1067 & 11 & 3230 & 3 & 67\tabularnewline
20kroD100-5 & 8677 & 8233.85 & 94.89 & 9247 & 106.57 & 478 & 732 & 1 & 3277 & 0 & 45\tabularnewline
20kroE100-2 & 9526 & 9290.65 & 97.53 & 10207 & 107.15 & 599 & 1098 & 7 & 4461 & 0 & 45\tabularnewline
20kroE100-3 & 9262 & 9153.61 & 98.83 & 9854 & 106.39 & 612 & 1048 & 7 & 3974 & 19 & 26\tabularnewline
20kroE100-4 & 9262 & 9147.56 & 98.76 & 11046 & 119.26 & 513 & 1032 & 3 & 3410 & 4 & 21\tabularnewline
20kroE100-5 & 9081 & 8900.07 & 98.01 & 9707 & 106.89 & 391 & 925 & 3 & 2802 & 0 & 32\tabularnewline
20rat99-2 & 505 & 504.33 & 99.87 & 521 & 103.17 & 507 & 951 & 0 & 0 & 0 & 7\tabularnewline
20rat99-3 & 504 & 498.23 & 98.85 & 543 & 107.74 & 528 & 977 & 4 & 1582 & 1 & 20\tabularnewline
20rat99-4 & 501 & 490.67 & 97.94 & 515 & 102.79 & 958 & 1259 & 5 & 10214 & 0 & 2383\tabularnewline
20rat99-5 & 487 & 477.67 & 98.08 & 506 & 103.90 & 688 & 967 & 4 & 4320 & 0 & 376\tabularnewline
20rd100-2$^\dagger$ & 3459 & 3380.39 & 97.73 & 3714 & 107.37 & 742 & 1406 & 0 & 4119 & 0 & 42\tabularnewline
20rd100-3 & 3383 & 3218.89 & 95.15 & 3384 & 100.03 & 657 & 1456 & 2 & 4238 & 1 & 55\tabularnewline
20rd100-4 & 3298 & 3167.38 & 96.04 & 3398 & 103.03 & 530 & 889 & 2 & 2651 & 0 & 29\tabularnewline
20rd100-5 & 3234 & 3109.99 & 96.17 & 3327 & 102.88 & 559 & 1056 & 6 & 4114 & 1 & 64\tabularnewline
21eil101-2 & 248 & 245.41 & 98.96 & 255 & 102.82 & 387 & 812 & 0 & 1476 & 0 & 20\tabularnewline
21eil101-3 & 248 & 243.04 & 98.00 & 267 & 107.66 & 570 & 982 & 4 & 2371 & 6 & 37\tabularnewline
21eil101-4 & 233 & 230.2759 & 98.83 & 251 & 107.73 & 432 & 629 & 3 & 2586 & 0 & 15\tabularnewline
21eil101-5 & 232 & 226.33 & 97.56 & 257 & 110.78 & 275 & 527 & 0 & 1483 & 2 & 16\tabularnewline
21lin105-2 & 8358 & 8316.43 & 99.50 & 8726 & 104.40 & 652 & 1122 & 0 & 0 & 0 & 16\tabularnewline
21lin105-3$^\dagger$ & 8304 & 8164.21 & 98.32 & 8619 & 103.79 & 870 & 1298 & 3 & 25572 & 22 & 7103\tabularnewline
21lin105-4 & 7827 & 7695.17 & 98.32 & 8365 & 106.87 & 619 & 941 & 2 & 888 & 12 & 89\tabularnewline
21lin105-5$^\dagger$ & 8052 & 7568.64 & 94.00 & 8110 & 100.72 & 745 & 1166 & 1 & 2419 & 6 & 145\tabularnewline
\end{longtable}}

{\scriptsize
\begin{longtable}{lR{1.1cm}R{1cm}R{1cm}R{1cm}R{1cm}R{1.2cm}r}
\caption{Algorithm computation times} \label{tab:times} \\

\toprule
name & total-t & sep-t & sec-t & 4pec-t & pec-t & comb-t & \%pec\tabularnewline \midrule
\endfirsthead

\multicolumn{8}{l}
{{\tablename\ \thetable{} -- continued from previous page}} \\
\toprule
name & total-t & sep-t & sec-t & 4pec-t & pec-t & comb-t & \%pec\tabularnewline \midrule
\endhead

\hline
\endfoot

\hline \noalign{\vskip1mm} 
\multicolumn{8}{l}{$^\dagger$optimality was not verified within a time-limit of 7200 seconds.}
\endlastfoot

3burma14-2 & 0.07 & 0.00 & 0.00 & 0.00 & 0.00 & 0.00 & 3.13\tabularnewline
3burma14-3 & 0.02 & 0.00 & 0.00 & 0.00 & 0.00 & 0.00 & 2.68\tabularnewline
3burma14-4 & 0.02 & 0.00 & 0.00 & 0.00 & 0.00 & 0.00 & 1.97\tabularnewline
3burma14-5 & 0.02 & 0.00 & 0.00 & 0.00 & 0.00 & 0.00 & 3.50\tabularnewline
4br17-2 & 0.03 & 0.00 & 0.00 & 0.00 & 0.00 & 0.00 & 1.14\tabularnewline
4br17-3 & 0.01 & 0.00 & 0.00 & 0.00 & 0.00 & 0.00 & 0.00\tabularnewline
4br17-4 & 0.02 & 0.00 & 0.00 & 0.00 & 0.00 & 0.00 & 0.00\tabularnewline
4br17-5 & 0.04 & 0.01 & 0.00 & 0.00 & 0.00 & 0.00 & 68.52\tabularnewline
4gr17-2 & 1.16 & 0.33 & 0.10 & 0.00 & 0.22 & 0.01 & 65.71\tabularnewline
4gr17-3 & 0.23 & 0.05 & 0.01 & 0.00 & 0.04 & 0.00 & 74.03\tabularnewline
4gr17-4 & 0.02 & 0.00 & 0.00 & 0.00 & 0.00 & 0.00 & 0.00\tabularnewline
4gr17-5 & 0.01 & 0.00 & 0.00 & 0.00 & 0.00 & 0.00 & 0.00\tabularnewline
4ulysses16-2 & 0.05 & 0.00 & 0.00 & 0.00 & 0.00 & 0.00 & 1.71\tabularnewline
4ulysses16-3 & 0.05 & 0.00 & 0.00 & 0.00 & 0.00 & 0.00 & 2.04\tabularnewline
4ulysses16-4 & 0.08 & 0.00 & 0.00 & 0.00 & 0.00 & 0.00 & 1.93\tabularnewline
4ulysses16-5 & 0.13 & 0.02 & 0.01 & 0.00 & 0.02 & 0.00 & 72.63\tabularnewline
5gr21-2 & 12.89 & 3.63 & 1.00 & 0.00 & 2.54 & 0.09 & 69.98\tabularnewline
5gr21-3 & 0.04 & 0.00 & 0.00 & 0.00 & 0.00 & 0.00 & 2.28\tabularnewline
5gr21-4 & 0.02 & 0.00 & 0.00 & 0.00 & 0.00 & 0.00 & 2.86\tabularnewline
5gr21-5 & 0.07 & 0.00 & 0.00 & 0.00 & 0.00 & 0.00 & 2.81\tabularnewline
5gr24-2 & 1.81 & 0.45 & 0.07 & 0.00 & 0.38 & 0.00 & 84.82\tabularnewline
5gr24-3 & 3.51 & 0.92 & 0.18 & 0.00 & 0.73 & 0.01 & 79.17\tabularnewline
5gr24-4 & 2.89 & 0.76 & 0.11 & 0.00 & 0.64 & 0.01 & 83.80\tabularnewline
5gr24-5 & 1.63 & 0.38 & 0.12 & 0.00 & 0.25 & 0.01 & 65.26\tabularnewline
5ulysses22-2 & 0.77 & 0.18 & 0.04 & 0.00 & 0.13 & 0.00 & 74.26\tabularnewline
5ulysses22-3 & 0.43 & 0.03 & 0.03 & 0.00 & 0.00 & 0.00 & 0.64\tabularnewline
5ulysses22-4 & 0.18 & 0.02 & 0.02 & 0.00 & 0.00 & 0.00 & 0.75\tabularnewline
5ulysses22-5 & 0.06 & 0.01 & 0.01 & 0.00 & 0.00 & 0.00 & 1.82\tabularnewline
6bayg29-2 & 18.69 & 4.97 & 0.73 & 0.00 & 4.17 & 0.08 & 83.79\tabularnewline
6bayg29-3 & 20.50 & 5.66 & 1.31 & 0.00 & 4.19 & 0.15 & 74.10\tabularnewline
6bayg29-4 & 1.26 & 0.31 & 0.06 & 0.00 & 0.24 & 0.01 & 77.32\tabularnewline
6bayg29-5 & 1.19 & 0.27 & 0.08 & 0.00 & 0.18 & 0.01 & 68.11\tabularnewline
6bays29-2 & 21.40 & 6.19 & 0.96 & 0.00 & 5.14 & 0.08 & 83.16\tabularnewline
6bays29-3 & 10.60 & 2.78 & 0.33 & 0.00 & 2.43 & 0.02 & 87.50\tabularnewline
6bays29-4 & 1.22 & 0.30 & 0.05 & 0.00 & 0.24 & 0.01 & 80.74\tabularnewline
6bays29-5 & 0.97 & 0.22 & 0.04 & 0.00 & 0.18 & 0.00 & 79.98\tabularnewline
6fri26-2 & 5.55 & 1.34 & 0.12 & 0.00 & 1.22 & 0.01 & 90.53\tabularnewline
6fri26-3 & 18.32 & 5.55 & 1.11 & 0.00 & 4.31 & 0.13 & 77.68\tabularnewline
6fri26-4 & 3.75 & 0.92 & 0.12 & 0.00 & 0.78 & 0.01 & 85.23\tabularnewline
6fri26-5 & 3.26 & 0.83 & 0.12 & 0.00 & 0.70 & 0.01 & 84.67\tabularnewline
9dantzig42-2 & 1.07 & 0.28 & 0.27 & 0.00 & 0.00 & 0.01 & 0.38\tabularnewline
9dantzig42-3 & 1.26 & 0.34 & 0.16 & 0.00 & 0.18 & 0.00 & 51.77\tabularnewline
9dantzig42-4 & 5.15 & 1.29 & 0.22 & 0.00 & 1.05 & 0.01 & 81.81\tabularnewline
9dantzig42-5 & 7.97 & 1.93 & 0.20 & 0.00 & 1.71 & 0.01 & 88.71\tabularnewline
10att48-2 & 280.75 & 80.02 & 6.73 & 0.00 & 72.88 & 0.41 & 91.08\tabularnewline
10att48-3 & 243.27 & 71.62 & 9.29 & 0.00 & 60.66 & 1.67 & 84.70\tabularnewline
10att48-4 & 203.20 & 59.39 & 7.56 & 0.00 & 50.63 & 1.19 & 85.26\tabularnewline
10att48-5 & 130.36 & 38.93 & 5.95 & 0.00 & 31.74 & 1.23 & 81.55\tabularnewline
10gr48-2 & 9.25 & 2.26 & 0.21 & 0.00 & 2.04 & 0.01 & 90.50\tabularnewline
10gr48-3 & 31.81 & 7.87 & 0.54 & 0.00 & 7.30 & 0.03 & 92.72\tabularnewline
10gr48-4 & 39.36 & 9.62 & 0.60 & 0.00 & 8.96 & 0.06 & 93.10\tabularnewline
10gr48-5 & 43.79 & 11.76 & 1.39 & 0.00 & 10.17 & 0.20 & 86.48\tabularnewline
10hk48-2 & 273.81 & 69.58 & 3.29 & 0.00 & 66.15 & 0.14 & 95.07\tabularnewline
10hk48-3 & 170.99 & 43.05 & 1.76 & 0.00 & 41.19 & 0.10 & 95.66\tabularnewline
10hk48-4 & 35.98 & 9.64 & 1.04 & 0.00 & 8.51 & 0.09 & 88.28\tabularnewline
10hk48-5 & 92.75 & 24.49 & 1.57 & 0.00 & 22.84 & 0.08 & 93.27\tabularnewline
11berlin52-2 & 2.28 & 1.06 & 1.03 & 0.00 & 0.00 & 0.02 & 0.37\tabularnewline
11berlin52-3 & 67.95 & 16.48 & 0.95 & 0.00 & 15.48 & 0.05 & 93.91\tabularnewline
11berlin52-4 & 27.96 & 7.19 & 0.44 & 0.00 & 6.72 & 0.04 & 93.41\tabularnewline
11berlin52-5 & 19.57 & 5.17 & 0.46 & 0.00 & 4.66 & 0.05 & 90.16\tabularnewline
11eil51-2 & 200.63 & 48.72 & 1.39 & 0.00 & 47.29 & 0.03 & 97.08\tabularnewline
11eil51-3 & 100.95 & 24.48 & 0.98 & 0.00 & 23.47 & 0.03 & 95.85\tabularnewline
11eil51-4 & 142.50 & 37.00 & 1.94 & 0.00 & 34.95 & 0.11 & 94.45\tabularnewline
11eil51-5 & 33.19 & 8.25 & 0.36 & 0.00 & 7.87 & 0.02 & 95.42\tabularnewline
12brazil58-2 & 33.00 & 7.94 & 0.96 & 0.00 & 6.95 & 0.03 & 87.51\tabularnewline
12brazil58-3 & 56.51 & 13.29 & 0.93 & 0.00 & 12.31 & 0.06 & 92.60\tabularnewline
12brazil58-4 & 32.61 & 8.62 & 1.00 & 0.00 & 7.53 & 0.09 & 87.35\tabularnewline
12brazil58-5 & 3.48 & 1.06 & 0.52 & 0.00 & 0.44 & 0.10 & 41.55\tabularnewline
14st70-2 & 876.36 & 222.60 & 6.73 & 0.00 & 215.47 & 0.39 & 96.80\tabularnewline
14st70-3 & 1071.01 & 264.38 & 4.16 & 0.00 & 260.10 & 0.12 & 98.38\tabularnewline
14st70-4 & 354.16 & 87.56 & 1.86 & 0.00 & 85.61 & 0.08 & 97.78\tabularnewline
14st70-5 & 429.46 & 113.03 & 5.51 & 0.00 & 106.96 & 0.57 & 94.63\tabularnewline
16eil76-2 & 160.97 & 38.04 & 1.72 & 0.00 & 36.27 & 0.04 & 95.36\tabularnewline
16eil76-3 & 71.48 & 17.47 & 0.80 & 0.00 & 16.64 & 0.03 & 95.24\tabularnewline
16eil76-4 & 173.67 & 43.19 & 1.11 & 0.00 & 42.03 & 0.05 & 97.31\tabularnewline
16eil76-5 & 274.12 & 69.50 & 1.87 & 0.00 & 67.52 & 0.12 & 97.15\tabularnewline
20gr96-2$^\dagger$ & 7200.00 & 1901.87 & 44.02 & 0.00 & 1857.29 & 0.56 & 97.66\tabularnewline
20gr96-3$^\dagger$ & 7200.00 & 1862.37 & 38.38 & 0.00 & 1823.22 & 0.77 & 97.90\tabularnewline
20gr96-4 & 5467.42 & 1428.08 & 48.45 & 0.00 & 1378.35 & 1.28 & 96.52\tabularnewline
20gr96-5 & 6495.00 & 1643.50 & 35.00 & 0.00 & 1607.79 & 0.71 & 97.83\tabularnewline
20kroA100-2 & 4291.87 & 1091.52 & 22.62 & 0.00 & 1068.47 & 0.42 & 97.89\tabularnewline
20kroA100-3 & 4225.89 & 1060.29 & 14.82 & 0.00 & 1044.91 & 0.56 & 98.55\tabularnewline
20kroA100-4 & 5057.47 & 1300.82 & 28.19 & 0.00 & 1271.60 & 1.04 & 97.75\tabularnewline
20kroA100-5 & 6368.98 & 1606.81 & 20.13 & 0.00 & 1585.98 & 0.70 & 98.70\tabularnewline
20kroB100-2 & 3389.43 & 841.28 & 12.24 & 0.00 & 828.79 & 0.25 & 98.52\tabularnewline
20kroB100-3$^\dagger$ & 7200.04 & 1838.03 & 33.81 & 0.00 & 1803.14 & 1.08 & 98.10\tabularnewline
20kroB100-4 & 3120.43 & 778.88 & 9.44 & 0.00 & 769.15 & 0.29 & 98.75\tabularnewline
20kroB100-5 & 3397.49 & 883.26 & 24.75 & 0.00 & 857.50 & 1.01 & 97.08\tabularnewline
20kroC100-2$^\dagger$ & 7200.00 & 1821.34 & 15.18 & 0.00 & 1805.91 & 0.25 & 99.15\tabularnewline
20kroC100-3 & 3052.62 & 747.14 & 10.82 & 0.00 & 736.09 & 0.23 & 98.52\tabularnewline
20kroC100-4 & 1009.37 & 250.86 & 4.82 & 0.00 & 245.88 & 0.16 & 98.01\tabularnewline
20kroC100-5 & 2839.31 & 713.70 & 11.93 & 0.00 & 701.39 & 0.38 & 98.28\tabularnewline
20kroD100-2$^\dagger$ & 7200.00 & 1852.91 & 33.91 & 0.00 & 1818.46 & 0.54 & 98.14\tabularnewline
20kroD100-3 & 6287.9 & 1671.43 & 50.47 & 0.00 & 1619.66 & 1.30 & 96.90\tabularnewline
20kroD100-4 & 4716.98 & 1190.26 & 18.79 & 0.00 & 1170.92 & 0.55 & 98.38\tabularnewline
20kroD100-5 & 2669.25 & 671.32 & 13.10 & 0.00 & 657.78 & 0.44 & 97.98\tabularnewline
20kroE100-2 & 4718.14 & 1204.19 & 24.14 & 0.00 & 1179.63 & 0.41 & 97.96\tabularnewline
20kroE100-3 & 4737.91 & 1147.37 & 24.29 & 0.00 & 1122.59 & 0.49 & 97.84\tabularnewline
20kroE100-4 & 2624.53 & 641.08 & 17.04 & 0.00 & 623.69 & 0.35 & 97.29\tabularnewline
20kroE100-5 & 1892.52 & 476.91 & 10.32 & 0.00 & 466.24 & 0.35 & 97.76\tabularnewline
20rat99-2 & 65.57 & 12.65 & 12.55 & 0.00 & 0.02 & 0.09 & 0.15\tabularnewline
20rat99-3 & 2416.98 & 583.46 & 14.15 & 0.00 & 569.01 & 0.30 & 97.52\tabularnewline
20rat99-4 & 6091.56 & 1414.13 & 140.03 & 0.00 & 1245.85 & 28.26 & 88.10\tabularnewline
20rat99-5 & 3165.79 & 747.76 & 46.84 & 0.00 & 693.47 & 7.45 & 92.74\tabularnewline
20rd100-2$^\dagger$ & 7200.00 & 1846.05 & 37.12 & 0.00 & 1808.40 & 0.52 & 97.96\tabularnewline
20rd100-3 & 3815.24 & 969.42 & 23.26 & 0.00 & 945.69 & 0.47 & 97.55\tabularnewline
20rd100-4 & 3273.97 & 826.82 & 16.76 & 0.00 & 809.60 & 0.46 & 97.92\tabularnewline
20rd100-5 & 2513.41 & 643.81 & 15.04 & 0.00 & 628.22 & 0.55 & 97.58\tabularnewline
21eil101-2 & 2100.39 & 519.56 & 10.63 & 0.00 & 508.75 & 0.19 & 97.92\tabularnewline
21eil101-3 & 4245.95 & 1069.99 & 18.31 & 0.00 & 1051.25 & 0.43 & 98.25\tabularnewline
21eil101-4 & 906.82 & 227.88 & 7.48 & 0.00 & 220.15 & 0.25 & 96.61\tabularnewline
21eil101-5 & 682.82 & 172.40 & 4.07 & 0.00 & 168.13 & 0.19 & 97.52\tabularnewline
21lin105-2 & 86.33 & 21.14 & 20.93 & 0.00 & 0.03 & 0.18 & 0.15\tabularnewline
21lin105-3$^\dagger$ & 7200.00 & 2047.88 & 380.14 & 0.00 & 1566.72 & 101.02 & 76.50\tabularnewline
21lin105-4 & 3609.22 & 903.74 & 19.51 & 0.00 & 883.49 & 0.74 & 97.76\tabularnewline
21lin105-5$^\dagger$ & 7200.00 & 1890.67 & 45.87 & 0.00 & 1843.24 & 1.56 & 97.49\tabularnewline
\end{longtable}}

The results indicate that the proposed branch-and-cut algorithm can solve instances involving up to 105 targets with modest computation times. The preprocessing algorithm in Sec. \ref{subsec:preprocessing} was applied to 53/116 instances. The time taken by the preprocessing algorithm is not included in the overall computation time. The preprocessing algorithm reduced the size of these instances by 6 targets on average and the maximum reduction obtained was 14 targets.  We observe that the instances that have a larger number of violated path elimination constraints take considerably large amount of computation time. The last column in table \ref{tab:times}, whose average is 73\%, indicates the percentage of separation time spent for finding violated path elimination constraints. This is not surprising because the time complexity for identifying violated path elimination constraints in \eqref{eq:pec1} and \eqref{eq:pec2} given a fractional solution, is $O(|T|^5)$ and $O(m|T|^4)$ respectively.  The average number of T-comb inequalities that were generated in the enumeration tree were larger for some of the bigger instances (see table \ref{tab:results}). They were effective, especially in tightening the lower bound for the instances that were not solved to optimality; for the instances where violated T-comb inequalities were separated out, the average linear programming relaxation gap improvement was 18\%. They were also useful in reducing the computation times for larger instances despite increasing the computation times for smaller instances. Overall, we were able to solve 108/116 instances to the optimality with the largest instance involving 105 targets, 21 clusters and 5 depots. The ``\textbf{opt}'' column for the remaining 6/116 instances is the cost of the best feasible solution obtained by the branch-and-cut algorithm at the end of 7200 seconds. For the instances not solved to optimality within the time limit of 7200 seconds, the LP-rounding heuristic was effective in  generating feasible solutions within 2.1\% of the best feasible solution, on average. 

\section{Conclusion \label{sec:conc}}
In summary, we have presented an exact algorithm for the GMDTSP, a problem that has several practical applications including maritime transportation, health-care logistics, survivable telecommunication network design, and routing unmanned vehicles to name a few. A mixed-integer linear programming formulation including several classes of valid inequalities was proposed the facial structure of the polytope of feasible solutions was studied in detail. All the results were used to develop a branch-and-cut algorithm whose performance was corroborated through extensive numerical experiments on a wide range of benchmark instances from the standard library. The largest solved instance involved 105 targets, 21 clusters and 4 depots. Future work can be directed towards development of branch-and-cut approaches accompanied with a polyhedral study to solve the asymmetric counterpart of the problem. 

\section*{\refname}
\bibliography{GMDTSP}
\bibliographystyle{plainnat}
\end{document}